\newcommand{\changeurlcolor}[1]
{\hypersetup{urlcolor=#1}}    
\renewcommand*\env@matrix[1][*\c@MaxMatrixCols c]{%
  \hskip -\arraycolsep
  \let\@ifnextchar\new@ifnextchar
 \array{#1}}
\newtheorem{theorem}{Theorem}
\newtheorem{lemma}{Lemma}
\newtheorem{corollary}{Corollary}
\newtheorem{proposition}{Proposition}
\newtheorem*{remark}{Remark}
\title{Cost optimisation of individual-based institutional reward incentives for promoting cooperation in finite populations}
\author[$\dagger$]{M. H. Duong}
\author[$\dagger$]{C. M. Durbac}
\author[$\ddag$]{T. A. Han}
\affil[$\dagger$]{School of Mathematics, University of Birmingham, UK.}
\affil[$\ddag$]{ School of Computing, Engineering and Digital Technologies, Teesside University, UK. }
\date\today
\begin{document} 
\maketitle
\begin{abstract}
In this paper, we study the problem of cost optimisation of individual-based institutional incentives (reward, punishment, and hybrid) for  guaranteeing  a certain minimal level of cooperative behaviour in a well-mixed, finite population. In this scheme, the individuals in the population interact via cooperation dilemmas (Donation Game or Public Goods Game) in which institutional reward is carried out only if cooperation is not abundant enough  (i.e., the number of cooperators is below a threshold $1\leq t\leq N-1$, where $N$ is the population size);  and similarly, institutional punishment is carried out only  when defection is too abundant. We study analytically the cases $t=1$ for the reward incentive under the small mutation limit assumption and two different initial states, showing that the cost function is always non-decreasing. We derive the neutral drift and strong selection limits when the intensity of selection tends to zero and infinity, respectively. We numerically investigate the problem for other values of $t$ and for population dynamics with arbitrary mutation rates.
\end{abstract}
\newpage
\tableofcontents

\section{Introduction}
Cooperation refers to the act of paying a cost to oneself  in order to convey a benefit to somebody else.  It is one of the cornerstones of human civilisation and one of the reasons for our unprecedented success as a species \cite{atkins2019prosocial}. Organisations are constantly faced with the problem of allocating resources in a budget-effective way. This issue becomes particularly essential for institutions like local governments and  the United Nations, where the optimisation of resources is crucial in facilitating cooperative endeavors. Given the paramount importance of cooperation, these organisations are tasked with strategically managing the costs associated with incentivising collective efforts \cite{ostrom2009understanding,van2014reward}. 

A well-established theoretical framework for analysing the promotion of cooperation is Evolutionary Game Theory (EGT) \cite{sigmund2010calculus}, which has been used  in both deterministic and stochastic settings. Using this framework, several mechanisms for promoting the evolution of cooperation have been studied including kin selection, direct reciprocity, indirect reciprocity, network reciprocity, group selection, and different forms of  incentives \cite{nowak2006,sigmund2010calculus,perc2017statistical,rand2013human,van2014reward,xia2023reputation,hu2020rewarding,moralpref}. 

The current work focuses on institutional incentives \cite{sasaki2012take,sigmundinstitutions,wang2019exploring,duong2021cost,cimpeanu2021cost,sun2021combination,van2014reward,gurerk,gois2019reward,sun2021combination,liu2022effects,flores2024evolution,wang2024partial,hua2024coevolutionary}, which are a plan of action involving the use of reward (i.e., increasing the payoff of cooperators), punishment (i.e., decreasing the payoff of defectors), or a combination of the two, by an external decision-maker. More precisely, we study how the aforementioned institutional incentives can be used in a cost-efficient way for maximising the levels of cooperative behaviour in a population of self-regarding individuals. In the literature, although there is a significant amount of works using agent-based numerical simulations, there are only a few papers that employ a rigorous analysis of the problem at hand~\cite{wang2019exploring,han2018cost, duong2021cost,DuongDurbacHan2022, wangdecentralised,wang2023optimization}. The works that do use analysis employ two complementary approaches, either in a continuous setting where the evolutionary processes are modelled as a continuous dynamical system (for instance, using the replicator dynamics) \cite{wang2019exploring,wangdecentralised,wang2023optimization} or in a discrete setting, in which the population dynamics is modelled as a Markov chain \cite{han2018cost, duong2021cost,DuongDurbacHan2022}. We review in more detail both approaches in the next paragraphs since they are most relevant to the present work.

In the discrete setting, the evolutionary process is often described by a Markov chain with an update rule (for instance, in the absence of mutation, it is an imitation process with the Fermi strategy update rule). For well-mixed finite populations and general two-player two-strategy games, the problem of promoting the evolution of cooperative behaviour with a minimum cost is formulated and numerically studied in \cite{han2018cost}. 
In this paper, the decision-maker may use a full-invest scheme, in which, in each generation, all cooperators are rewarded (or all defectors are punished) or an individual-based scheme, in which cooperators (defectors) are rewarded (punished, respectively) only if cooperation is not frequent enough (defection is too abundant, respectively). In both cases, the expected total cost of interference is a finite sum, over the state space of the Markov chain, of per generation costs. In \cite{duong2021cost}, the authors then analyse the cost function for a full-invest scheme in which individuals interact via donation games or public goods games. They prove that the cost function exhibits a phase transition when the intensity of selection varies and exactly calculate the optimal cost of the incentive for any given intensity of selection. In a more recent paper \cite{DuongDurbacHan2022}, similar results are obtained in the case of hybrid (mixed) reward and punishment incentives.


In the continuous setting, the evolutionary process is modelled by the replicator dynamics, which is a set of differential equations describing the evolution of the behavioural frequencies. For infinitely large, well-mixed populations, the problem of providing a minimum cost that guarantees a sufficient level of cooperation is formulated as an optimal control problem in \cite{wang2019exploring}. By using the approach of the Hamilton-Jacobi-Bellman equation, the authors theoretically obtain the optimal reward (positive) or punishment (negative) incentive strategies with the minimal cumulative cost, respectively. Similar results for structured population are obtained in \cite{wang2023optimization} (for either reward or punish incentive separately) and in \cite{wang2023optimally} (for combined incentives), using pair approximate methods. In these papers, the decision-maker implements incentives centrally within the whole population.  In \cite{wangdecentralised}, the authors consider the same problem but using decentralised incentives, that is, in each game group, a local incentive-providing institution implements local punishment or reward incentives on group members. We also refer the reader to  \cite{wang2021} for a recent survey on this optimal control approach.

\paragraph{Overview of contribution of this paper.} Following the discrete approach in \cite{han2018cost,duong2021cost,DuongDurbacHan2022}, in this work, we rigorously study the problem of cost optimisation of institutional reward or/and punishment  for maximising the levels of cooperative behaviour (or guaranteeing at least a certain level of cooperation) for well-mixed, finite populations. We focus on individual-based schemes, in which, in each generation, only when the number of cooperators/defectors is below/above a certain threshold $t$ (with $1\leq t< N-1$, where $N$ is the population size), are they rewarded (punished, respectively) (the case  of the full-invest scheme $t=N-1$ has been studied in \cite{duong2021cost,DuongDurbacHan2022}).

Analysing this problem for an arbitrary value of $t$ would be very challenging due to the number of parameters involved such as the number of individuals in the population, the strength of selection, the game-specific quantities, as well as the efficiency ratios of providing the corresponding incentive. In particular, the Markov chain based evolutionary process is of order equal to the population size, which is large but finite. The calculation of the entries of the corresponding fundamental matrix, which appear in the cost function, is intricate, both analytically and computationally.

Our present work provides a rigorous analysis of this problem in the case of reward  for $t = 1$. The main analytical results of the paper can be summarised as follows.
\begin{enumerate}[(i)]
    \item We show that the cost function is always non-decreasing regardless of the values of other parameters, for two initial state assumptions, namely when the evolutionary dynamics starts either equally from the two homogeneous states or from the all-defector homogeneous state. The monotonicity of the reward cost function $E_r$ with respect to the incentive cost per capita $\theta$ highlights that, in order to achieve a higher level of cooperation, the institution needs to employ more financial resources.
    \item We obtain the asymptotic behaviour of the cost function in the limits of neutral drift and strong selection when the intensity of selection goes to zero or infinity, respectively.
    \end{enumerate}
We also numerically investigate the cost function and its behaviour for other values of $t$. While the main analytical results focus on the small mutation limit $\mu\rightarrow 0$, we perform numerical simulations for the case of arbitrary mutation rates.

The rest of the paper is organised as follows. In Section \ref{sec: models}, we present the model and methods. Our main results are Theorems \ref{thm: derivative positive t1 (1/2, 1/2)} and \ref{thm: derivative positive t1 (1, 0)} on the monotonicity of the reward cost function for $t = 1$ and Propositions \ref{prop: neutral drift t=1 (1/2, 1/2)}, \ref{prop: infinite selection t=1 (1/2, 1/2)}, \ref{prop: neutral drift t=1 (1, 0)}, and \ref{prop: infinite selection t=1 (1, 0)} on the asymptotic behaviour (neutral drift, strong selection limits) of the reward cost function for $t = 1$, in Section \ref{sec: reward}. Section \ref{sec: general mutation} contains the cost function (for reward, punishment, and hybrid incentives) in the case of a general mutation rate. In Section \ref{sec: numerical analysis}, we provide a numerical analysis of the reward cost function for both the small mutation limit and the general mutation rate cases, highlighting its qualitative behaviour. Summary and further discussions are provided in Section \ref{sec: discussion}. Finally, Section \ref{sec: appendix} contains small population computations, detailed calculations for obtaining the reward cost function for $t = 2$, as well as numerical simulations for the punishment and hybrid cost functions.

\section{Model and methods} 
\label{sec: models}

In this section, we present the model and methods of the paper. We first introduce the class of games, namely cooperation dilemmas, that we are interested in throughout this work.
\subsection{Evolutionary processes} 
\label{sec: cost of incentive}
We consider an evolutionary process of a well-mixed, finite population of $N$ interacting individuals (players) and we model the finite population dynamics on an absorbing Markov chain of $(N+1)$ states, $\{S_0, ..., S_N\}$, where $S_j$ represents a population with $j$ cooperators (and $N-j$ defectors) (the sates $S_0$ and $S_N$ are absorbing). We employ the Fermi strategy update rule \cite{traulsen2006} stating that a player $X$ with fitness $f_X$ adopts the strategy of another player $Y$ with fitness $f_Y$ with a probability given by $P_{X,Y}=\left(1 + e^{-\beta(f_Y-f_X)}\right)^{-1}$, where $\beta$ represents the intensity of selection. 

\subsection{Cooperation dilemmas} 

Individuals engage with one another using one of the following one-shot (i.e., non-repeated) cooperation dilemmas: the Donation Game (DG) or its multi-player version, the Public Goods Game (PGG). Strategy wise, each player can choose to either cooperate (C) or  defect (D). 

Let $\Pi_C(j)$ be the average payoff of a C player (cooperator) and $\Pi_D(j)$ that  of a D player (defector), in a population with $j$ $C$ players and $(N-j)$ $D$ players. As can be seen below, the difference in payoffs $\delta = \Pi_C(j) - \Pi_D(j)$ in both games does not depend on $j$. For the two cooperation dilemmas considered in this paper, namely the Donation Games and the Public Goods Games,  it is  always the case that $\delta < 0$. This does not cover some weak social dilemmas such as the snowdrift game, where $\delta>0$ for some $j$, the general prisoner's dilemma, and the collective risk game \cite{sun2021combination}, where $\delta$ depends on $j$. 

\subsubsection*{Donation Game (DG)} 
\noindent The Donation Game is a form of Prisoner's Dilemma in which cooperation corresponds to offering the other player a benefit $B$ at a personal cost $c$, satisfying that $B > c$. Defection means offering nothing. 
The payoff matrix of DG (for the row player) is given as follows  
\[
 \bordermatrix{~ & C & D\cr
                  C & B-c & -c \cr
                  D & B & 0  \cr
                 }. 
\]

\noindent Denoting $\pi_{X,Y}$ the payoff of a strategist  $X$ when playing with a strategist $Y$ from the  payoff matrix above, we obtain
\begin{equation*} 
\begin{split} 
\Pi_C(j) &=\frac{(j-1)\pi_{C,C} + (N-j)\pi_{C,D}}{N-1} = \frac{(j-1) (B-c) + (N-j) (-c)}{N-1}  ,\\
\Pi_D(j) &=\frac{j\pi_{D,C} + (N-j-1)\pi_{D,D}}{N-1} =\frac{j B}{N-1}.
\end{split}
\end{equation*} 
Thus, 
$$\delta = \Pi_C(j) - \Pi_D(j) =  -\Big(c + \frac{B}{N-1}\Big).$$

\subsubsection*{Public Goods Game (PGG)} 

\noindent In a Public Goods Game, players interact in a group of size $n$, where they decide to cooperate, contributing an amount $c > 0$ to a common pool, or to defect, contributing nothing to the pool. The total contribution in a group is multiplied by a factor $r$, where $1 < r < n$ (for the PGG to be a social dilemma), which is then shared equally among all members of the group, regardless of their strategy. Intuitively, contributing nothing offers one a higher amount of money after redistribution.

The average payoffs, $\Pi_C(j)$ and $\Pi_D(j)$, are calculated based on the assumption that the groups engaging in a public goods game are given by multivariate
hypergeometric sampling. Thereby, for transitions between two pure states, this reduces to sampling, without replacement, from a hypergeometric distribution. More precisely, we obtain \cite{hauert2007}
\begin{equation*} 
\begin{split} 
\Pi_C(j) &= \sum^{n-1}_{i=0}\frac{\dbinom{j-1}{i}\dbinom{N-j}{n-1-i}}{
 \dbinom{N-1}{n-1}} \ \left(\frac{(i+1)rc}{n} - c\right) =\frac{rc}{n}\left(1 + (j-1)\frac{n-1}{N-1}\right) - c ,\\
\Pi_D(j) &=\sum^{n-1}_{i=0}\frac{\dbinom{j}{i}\dbinom{N-1-j}{n-1-i}}{
 \dbinom{N-1}{n-1}} \ \frac{jrc}{n} =\frac{rc(n-1)}{n(N-1)}j.
\end{split}
\end{equation*} 
Thus, 
$$\delta = \Pi_C(j) - \Pi_D(j) = -c \left(1 - \frac{r(N-n)}{n(N-1)} \right).
$$

\subsection{Cost of institutional incentives} 

To reward a cooperator (to punish a defector), the institution has to pay an amount $\theta/a$ ($\theta/b$, respectively) so that the cooperator's (defector's) payoff increases (decreases) by $\theta$, where $a, b > 0$ are constants representing the efficiency ratios of providing this type of incentive. 

In an institutional enforcement setting, we assume that the institution has full information about the population composition or statistics at the time of decision-making. That is, given the well-mixed population setting, we assume that  the number $j$ of cooperators in the population is known. While reward and punishment and their usefulness as institutional incentives have been previously studied, including in \cite{duong2021cost, DuongDurbacHan2022} which are most relevant to this work, the approach in the aforementioned papers is a `full-invest' one, i.e., in each state of the evolutionary process, all cooperators (defectors) are rewarded (punished). 

In the present paper, we consider investment schemes that reward (or/and punish) $t$ $C$ players ($N - t$ $D$ players) whenever the  number of cooperators in the population does not exceed a given threshold $t$ (whenever the number of defectors in the population exceeds a given threshold $t$) for $1 \leq t \leq N-1$. The argument for this type of approach and its applicability in real life is as follows. Firstly, if cooperation is sufficiently frequent, the cooperators might survive by themselves without further need of costly incentives. Secondly, if the institution spreads their incentive budget for too many individuals, then the impact on each individual  might not be enough to alter the global  dynamics. 

Hence, we have, for $1\leq j\leq t$, the  cost per generation for the incentive providing  institution is
\begin{equation}
\label{eq: incentives per generation}
\theta_j = \begin{cases} \frac{j}{a}\theta,\quad \text{reward incentive},\\
\frac{N-j}{b}\theta,\quad \text{punishment incentive},\\
\min\Big(\frac{j}{a}, \frac{N-j}{b}\Big)\theta,\quad\text{mixed incentive},
\end{cases}
\end{equation}
while $\theta_j= 0$ for $t<j\leq N-1$.

Next, we derive the total expected cost of inference over all generations. We do so for two different initial states: i) randomly commencing in the state $S_0$ or the state $S_N$ and ii) starting in the state $S_0$ (as it is more likely to find the population at the homogeneous state of all defectors than that of all cooperators).

We firstly study i), i.e.,  the population is equally likely to start in the homogeneous state $S_0$ (no cooperators) as well as in the homogeneous state $S_N$ (all cooperators).
Let $(n_{ik})_{i,k=1}^{N-1}$ be the entries of the fundamental matrix of the absorbing Markov chain of the evolutionary process. The entries give the expected number of times the population is in the state $S_j$ if it has started in the transient state $S_i$ \cite{kemeny1976finite}. Under the above assumption, a mutant can randomly equally occur either at $S_0$ or $S_N$. Thus, the expected number of visits at state $S_j$ ($1\leq j\leq N-1$) is $\frac{1}{2} (n_{1j} + n_{N-1,j})$. Therefore, the expected cost of inference over all generations is given by
\begin{equation}
\label{eq: incentives (1/2,1/2)}
 E(\theta)=\frac{1}{2}\sum_{j=1}^{N-1} (n_{1j}+n_{N-1,j})\theta_j=\frac{1}{2}\sum_{j=1}^t (n_{1j}+n_{N-1,j})\theta_j,   
\end{equation}
where the second equality follows from the individual-based incentives \eqref{eq: incentives per generation}.
\begin{remark}
We comment on our assumption that the population is equally likely to start in either homogeneous state. This assumption is reasonable when mutation is negligible and is often made in many works based on agent-based simulations \cite{chen2014optimal,cimpeanu2023does,2ndfreeriding,han2018fostering,carrotstick} (in these works, simulations end whenever the population fixates in a homogeneous state). Our model therefore encapsulates the intermediate-run dynamics, an approximation that is valid if the time-scale is long enough for one type to reach fixation, but too short for the next mutant to appear. It might thus be more practically useful for the optimisation of the institutional budget for providing incentives on an intermediate timescale.   

In the most general case, when mutation is frequent, the above assumption might not be suitable. For example, if cooperators are very likely to fixate in a population of defectors, but defectors are unlikely to fixate in a population of cooperators, mutants are on average more likely to appear in the homogeneous cooperative population (that is in $S_N$). Similarly, if defectors are very likely to fixate in a population of cooperators, but cooperators are unlikely to fixate in a population of defectors, mutants are on average more likely to appear in $S_0$ rather than $S_N$. In general, in the long-run, the population will start at $i = 0$ ($i = N$, respectively) with probability equal to the frequency of D (C) computed at the equilibrium, $f_D = 1/(r+1)$ ($f_C = r/(r+1)$, respectively), where $r = e^{\beta (N-1)(\delta +  \theta)}$. Thus, generally, the expected number of visits at state $S_i$ will be $ f_D n_{1i} + f_C n_{N-1,i}$. The cost function will therefore be
$$
E_r(\theta) = \sum_{j=1}^{N-1} (f_D*n_{1,j} + f_C*n_{N-1,j} )*\theta_j. 
$$
We will study the general setting in future work.
\end{remark}

\noindent We now compute the cost function where the dynamics starts from the state $S_0$, with no cooperators, which could be the case in the absence of incentives. Thus, the expected cost of inference over all generations is given by
\begin{equation}
\label{eq: incentives (1,0)}
 E(\theta)=\sum_{j=1}^{N-1} n_{1j}\theta_j=\sum_{j=1}^t n_{1j}\theta_j. 
\end{equation}

\subsubsection{Cooperation frequency and optimal incentives}
Next, we construct the problem of cost optimisation of individual-based institutional incentives (reward, punishment, and hybrid) for maximising the level (or guaranteeing at least a certain level) of cooperative behaviour.

Since the population consists of only two strategies, the fixation  probabilities of a C (D) player in a homogeneous population of D (C) players  when the interference scheme is carried out are, respectively, \cite{nowak}
\begin{equation*} 
\begin{split}
\rho_{D,C} &= \left(1+\sum_{i = 1}^{N-1} \prod_{k = 1}^i \frac{1+e^{\beta(\Pi_C(k)-\Pi_D(k) +  \delta(k) \theta)}}{1+e^{-\beta(\Pi_C(k)-\Pi_D(k)+ \delta(k)\theta)}}  \right)^{-1}, \\
\rho_{C,D} &= \left(1+\sum_{i = 1}^{N-1} \prod_{k = 1}^i \frac{1+e^{\beta(\Pi_D(k)-\Pi_C(k) - \delta(k)\theta)}}{1+e^{-\beta(\Pi_D(k)-\Pi_C(k)-\delta(k)\theta)}}  \right)^{-1}.
\end{split}
\end{equation*} 
where $\delta(k) = 1$ if $k \leq t$ and $\delta(k) = 0$ otherwise. 

Computing the stationary distribution using  these fixation probabilities, we  obtain the frequency of cooperation  $$\frac{\rho_{D,C}}{\rho_{D,C}+\rho_{C,D}}.$$

Hence, this frequency of cooperation can be maximised by maximising 
\begin{equation}
\label{eq: max}
\max_{\theta} \left(\rho_{D,C}/\rho_{C,D}\right).  
\end{equation} 

The fraction in Equation~\eqref{eq: max} can be simplified as follows \cite{nowak2006} 
\begin{eqnarray}
\label{eq: max_Q_prime}
\nonumber
\frac{\rho_{D,C}}{\rho_{C,D}} &=&  \prod_{k = 1}^{N-1} \frac{u_{i,i-1}}{u_{i,i+1}} =\prod_{k = 1}^{N-1} \frac{1 + e^{\beta[\Pi_C(k)-\Pi_D(k) + \delta(k) \theta]}}{1 + e^{-\beta[\Pi_C(k)-\Pi_D(k) + \delta(k)\theta]}} \\
\nonumber
&=& e^{\beta\sum_{k = 1}^{N-1} \left(\Pi_C(k)-\Pi_D(k) + \delta(k)\theta\right)} \\ \nonumber
 &=& e^{\beta [(N-1) \delta +  t \theta]}. 
 \end{eqnarray}

In the above transformation, $u_{i,i-1}$ and $u_{i,i-1}$ are the probabilities  to decrease or increase the number  of C players  (i.e., $i$) by one in each time step, respectively. 

Under neutral selection (i.e., when $\beta = 0$), there is no need to use incentives as no player is likely to copy another player and any changes in strategy that happen are due to noise as opposed to incentives. Thus, we only consider   $\beta > 0$. The goal is  to ensure  at least an $\omega  \in [0,1]$ fraction of cooperation, i.e., $\frac{\rho_{D,C}}{\rho_{D,C}+\rho_{C,D}} \geq \omega$. Thus,  it follows from the equation above  that 
\begin{equation} 
\label{eq:omega_fraction}
 \theta \geq \theta_0(\omega) = \frac{1}{t}\left( \frac{1}{\beta}\log\left(\frac{\omega}{1-\omega}\right) - (N-1)\delta\right).
\end{equation}
It is guaranteed that, if $\theta  \geq \theta_0(\omega)$, at least an $\omega$ fraction of cooperation is expected in the long run.
This condition implies that the lower bound of $\theta$ monotonically depends on $\beta$. Namely, when $\omega \geq 0.5$, it increases with $\beta$ and when $\omega < 0.5$, it decreases with $\beta$.

Bringing everything together, we obtain the following constrained mathematical minimisation problem of individual-based institutional incentives (reward, punishment, and hybrid) guaranteeing at least a certain level of cooperative behaviour:
\begin{equation}
\label{eq: min prob} \min_{\theta\geq \theta_0} E(\theta),
\end{equation} where $E(\theta)$ can either be the reward ($E_r(\theta)$), punishment ($E_p(\theta)$), or hybrid cost function ($E_{mix}(\theta)$). 

The main aim of this paper is study the above optimisation problem. We focus  on the reward incentive, in which the decision-maker rewards $j$ cooperators whenever $j \leq t$, as in \eqref{eq: incentives per generation} (thus, we set $a=1$ in \eqref{eq: incentives per generation} throughout the paper since it does not affect our optimisation problem). In principle, the punishment and mixed incentives cases are similar, and we only provide numerical investigations for these schemes.  

For $t=1$, we analytically show that $E_r(\theta)$ is non-decreasing as a function of $\theta$ for all values of other parameters. We also establish the neutral and strong selection limits of $E_r(\theta)$, that is
\[
\lim_{\beta\rightarrow 0} E_r(\theta) \quad\text{and}\quad \lim_{\beta\rightarrow \infty} E_r(\theta). 
\]
For other values of $t$, we numerically investigate the properties of $E_r(\theta)$. 
\section{Reward incentive under a small mutation limit}
\label{sec: reward}

We first calculate the reward cost function, $E_r(\theta)$, more explicitly. To this end, we compute the expected number of times the population contains $i$ C players for $1 \leq i \leq N-1$. Let  $U = \{u_{ik}\}_{i,k = 1}^{N-1}$ denote the transition matrix between the $N-1$ transient states, $\{S_1, ..., S_{N-1}\}$. According to the Fermi update rule, the transition matrix is given as follows:
\begin{equation} 
\label{eq: transition probabilities reward}
\begin{split} 
u_{i,i\pm k} &= 0 \qquad \text{ for all } k \geq 2, \\
u_{i,i\pm1} &= \frac{N-i}{N} \frac{i}{N} \left(1 + e^{\mp\beta[\Pi_C(i) - \Pi_D(i)+\theta_i/i]}\right)^{-1},\\
u_{i,i} &= 1 - u_{i,i+1} -u_{i,i-1}.
\end{split} 
\end{equation}

For simplicity, we normalise $a=1$, and obtain (recalling that $\Pi_C(i)-\Pi_D (i)=\delta$ for all $1\leq i\leq N-1$, and $\theta_i/i=\frac{\theta}{a}=\theta$ for $1\leq i\leq t$ and zero otherwise):
\begin{align*}
&u_{i,i\pm k}=0 \quad \text{for}\quad k\geq 2,
\\&u_{i,i+1}=\begin{cases}
\frac{(N-i)i}{N^2}\Big(1+e^{-\beta(\delta+\theta)}\Big)^{-1}\quad \text{for}\quad 1\leq i\leq t,\\
\frac{(N-i)i}{N^2}\Big(1+e^{-\beta\delta}\Big)^{-1}\quad \text{for}\quad t< i\leq N-1,\\
\end{cases}
\\&u_{i,i-1}=\begin{cases}
\frac{(N-i)i}{N^2}\Big(1+e^{\beta(\delta+\theta)}\Big)^{-1}\quad \text{for}\quad 1\leq i\leq t,\\
\frac{(N-i)i}{N^2}\Big(1+e^{\beta\delta}\Big)^{-1}\quad \text{for}\quad t< i\leq N-1,\\
\end{cases}\\
u_{i,i}&=1-u_{i,i+1}-u_{i,i-1}.
\end{align*}
Next, we need to calculate the entries $n_{ik}$ of the fundamental matrix $\mathcal{N}=(n_{ik})_{i,k=1}^{N-1}= (I-U)^{-1}$. By using $\frac{1}{1+m}+\frac{1}{1+\frac{1}{m}}=1$ for $m=e^{\beta[\Pi_C(i) - \Pi_D(i)+\theta_i/i]}$, we get $u_{i,i+1}+u_{i,i-1}=\frac{N-i}{N}\frac{i}{N}$. Then, by letting $V=(I-U)$, we obtain: 
\begin{equation} 
\begin{split} 
v_{i,i\pm k} &= 0 \qquad \text{ for all } k \geq 2, \\
v_{i,i\pm1} &= -\frac{N-i}{N} \frac{i}{N} \left(1 + e^{\mp\beta[\Pi_C(i) - \Pi_D(i)+\theta_i/i]}\right)^{-1},\\
v_{i,i} &= u_{i,i+1} +u_{i,i-1}=\frac{N-i}{N}\frac{i}{N}.
\end{split} 
\end{equation}

We can further write $V=W \mathrm{diag}\Big\{\frac{N-1}{N}\frac{1}{N},\ldots,\frac{N-i}{N}\frac{i}{N},\ldots, \frac{1}{N}\frac{N-1}{N}\Big\}$, where 
\begin{equation}
\label{eq: matrix W}
W=\begin{pmatrix}
1&-a&&&&&\\
-c&1&-a&&&&&\\
&\ddots&\ddots&\ddots&&&\\
&&-c&1&-a&&&\\
&&&-d&1&-b&&&\\
&&&&\ddots&\ddots&\ddots&\\
&&&&&-d&1&-b\\
&&&&&&-d&1
\end{pmatrix},
\end{equation}
with $a:=(1+e^{-\beta(\delta+\theta)})^{-1}, \quad b:=(1+e^{-\beta\delta})^{-1}, \quad c:=(1+e^{\beta(\delta+\theta)})^{-1}, \quad d:=(1+e^{\beta\delta})^{-1}$.

This implies  that
$\mathcal{N}=V^{-1}=\mathrm{diag}\Big\{\frac{N^2}{N-1},\frac{N^2}{2(N-2)},\ldots,\frac{N^2}{N-1}\Big\}W^{-1}$,
and so, the fundamental matrix is $\mathcal{N}= \mathrm{diag}\Big\{\frac{N^2}{N-1},\frac{N^2}{2(N-2)},\ldots,\frac{N^2}{N-1}\Big\}(n_{ik})_{i,k=1}^{N-1}= W^{-1}$.

Therefore, the expected total cost of interference for reward in the case of the dynamics starting equally in either state $S_0$ or state $S_N$ is
\begin{align}
\label{eq: total investment reward (1/2,1/2)}
E_r(\theta)&= \frac{1}{2} \sum_{j=1}^{N-1}(n_{1j} + n_{N-1,j}) \theta_j \nonumber
\\&=\frac{1}{2} \sum_{j=1}^{t}(n_{1j} + n_{N-1,j}) \theta_j \nonumber
\\&= \frac{N^2\theta}{2} \sum_{j=1}^{t}\frac{(W^{-1})_{1,j} + (W^{-1})_{N-1,j}}{N-j},
\end{align} where $\theta_j = \theta j$ follows from Equation \eqref{eq: incentives (1/2,1/2)} normalised with $a = 1$.

Similarly, when the population starts in the state $S_0$, we have
\begin{align}
\label{eq: total investment reward (1,0)}
E_r(\theta)&= \sum_{j=1}^{N-1}n_{1j}\theta_j=\sum_{j=1}^{t}n_{1j} \theta_j=\sum_{j=1}^{t}\frac{(W^{-1})_{1,j}}{N-j}.
\end{align}

The advantage of expressing the cost function in terms of the entries of the matrix $W$ is that this matrix is tri-diagonal. The inverse of a tri-diagonal matrix can be theoretically computed using recursive formulae, see e.g., \cite{huang1997}. In general, these formulae are still very hard to analytically explore. However, in the special extreme cases $t=1$ and $t=N-1$, we can explicitly obtain the entries of the inverse matrix $W^{-1}$. Therefore, we obtain analytically the explicit formula for the cost function. The case $t=N-1$ has already been studied in \cite{duong2021cost,DuongDurbacHan2022}. Thus, in this paper, we study the case $t=1$, which will be subsequently discussed in detail.

\subsection{Institutional reward with $t = 1$ for the equally likely starting state}
\label{sec: Reward t = 1 (1/2, 1/2)}

In this section, we introduce the analytical results related to the case of institutional reward with $t = 1$, when the institution provides reward only when there is a \textit{single} cooperator in the population.
We present the cost function for this particular case together with information on its monotonicity, as well as the limits for the neutral drift and strong selection.

The reward cost function for the threshold value $t=1$ is
\begin{equation}
\label{eq: investment reward (1/2,1/2)}
E_r(\theta)= \frac{N^2\theta}{2} \frac{(W^{-1})_{1,1} + (W^{-1})_{N-1,1}}{N-1},
\end{equation} obtained by substituting $t = 1$ in \eqref{eq: total investment reward (1/2,1/2)}. Next, we compute explicitly the entries $(W^{-1})_{1,1}$ and $(W^{-1})_{N-1,1}$. Note that, for $t=1$, the matrix $W$ is a special case of a tri-diagonal matrix of the form
$$
A=\left(\begin{array}{ccccccc}
b_{1} & c_{1} & & & & & \\
a_{2} & b_{2} & c_{2} & & & & \\
& \ddots & \ddots & \ddots & & & \\
& & a_{j} & b_{j} & c_{j} & & \\
& & & \ddots & \ddots & \ddots & \\
& & & & a_{n-1} & b_{n-1} & c_{n-1} \\
& & & & & a_{n} & b_{n}
\end{array}\right) .
$$
We recall the following result from \cite{huang1997} that provides analytical formulae for calculating the entries of the inverse matrix $A^{-1}$. We then apply this result to calculate $(W^{-1})_{1,1}$ and $(W^{-1})_{N-1,1}$.
\begin{theorem}(\cite{huang1997})
\label{thm: diag element}
Define the second-order linear recurrences
$$
z_{i}=b_{i} z_{i-1}-a_{i} c_{i-1} z_{i-2} \quad i=2,3, \ldots, n
$$ where $z_{0}=1, z_{1}=b_{1}$, and
$$
y_{j}=b_{j} y_{j+1}-a_{j+1} c_{j} y_{j+2} \quad j=n-1, n-2, \ldots, 1
$$ where $y_{n+1}=1, y_{n}=b_{n}$. The inverse matrix $A^{-1}=\left\{\phi_{i, j}\right\}(1 \leqslant i, j \leqslant n)$ can be expressed as

$$
\phi_{j, j}=\frac{1}{b_{j}-a_{j} c_{j-1} \frac{z_{j-2}}{z_{j-1}}-a_{j+1} c_{j} \frac{y_{j+2}}{y_{j+1}}}
$$

where $j=1,2, \ldots, n, a_{1}=0, c_{n}=0$ and

$$
\phi_{i, j}= \begin{cases}-c_{i} \frac{z_{i-1}}{z_{i}} \phi_{i+1, j} & i<j \\ -a_{i} \frac{y_{i+1}}{y_{i}} \phi_{i-1, j} & i>j .\end{cases}
$$
\end{theorem}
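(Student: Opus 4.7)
The plan is to reduce everything to the column equation $A\mathbf{v}^{(j)} = e_j$, where $\mathbf{v}^{(j)}$ denotes the $j$-th column of $A^{-1}$ with entries $\phi_{i,j}$, and $e_j$ is the $j$-th standard basis vector. Written out row by row this is the scalar tridiagonal recurrence
\begin{equation*}
a_i\, \phi_{i-1,j} + b_i\, \phi_{i,j} + c_i\, \phi_{i+1,j} = \delta_{ij},
\end{equation*}
with the conventions $a_1 = 0$ and $c_n = 0$. The three claims of the theorem correspond to running this recurrence from the top end downwards (for $i<j$), from the bottom end upwards (for $i>j$), and to reconciling the two sides at $i=j$.

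First I would treat the case $i<j$. For $i=1$ the recurrence gives $b_1 \phi_{1,j} + c_1 \phi_{2,j} = 0$, so $\phi_{1,j} = -c_1 (z_0/z_1)\, \phi_{2,j}$ since $z_0 = 1$ and $z_1 = b_1$. Assuming inductively that $\phi_{i-1,j} = -c_{i-1}(z_{i-2}/z_{i-1})\, \phi_{i,j}$, I would substitute into the row $i$ equation (which has right-hand side $0$ because $i<j$) and clear denominators; using the defining recurrence $z_i = b_i z_{i-1} - a_i c_{i-1} z_{i-2}$, the resulting expression collapses cleanly to $\phi_{i,j} = -c_i(z_{i-1}/z_i)\, \phi_{i+1,j}$, as required. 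The case $i>j$ is entirely symmetric: one starts at row $n$ (where $a_n \phi_{n-1,j} + b_n \phi_{n,j} = 0$, giving $\phi_{n,j} = -a_n(y_{n+1}/y_n)\phi_{n-1,j}$) and propagates upwards, now invoking the recurrence defining $y_j$.

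Once both off-diagonal recurrences are established, applying them at $i = j-1$ and $i = j+1$ yields $\phi_{j-1,j} = -c_{j-1}(z_{j-2}/z_{j-1})\, \phi_{j,j}$ and $\phi_{j+1,j} = -a_{j+1}(y_{j+2}/y_{j+1})\, \phi_{j,j}$. Substituting these into the row $j$ equation $a_j \phi_{j-1,j} + b_j \phi_{j,j} + c_j \phi_{j+1,j} = 1$ and solving for $\phi_{j,j}$ delivers the stated closed form at once. The main obstacle is essentially bookkeeping rather than anything conceptual: one must verify non-vanishing of each $z_{i-1}$ and $y_{i+1}$ appearing in a denominator (equivalent to non-singularity of the relevant principal submatrices of $A$, which is implicit in invertibility together with a standard genericity assumption on $A$), and separately handle the edge indices $j=1$ and $j=n$, where one of the two adjacent off-diagonal terms disappears by virtue of the conventions $z_0 = 1$, $y_{n+1}=1$ and $a_1 = c_n = 0$.
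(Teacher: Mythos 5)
The paper does not prove this statement at all: it is imported verbatim from \cite{huang1997} as a known result, so there is no internal proof to compare against. Your argument is nevertheless correct and complete. Reading off row $i$ of $A\mathbf{v}^{(j)}=e_j$ gives $a_i\phi_{i-1,j}+b_i\phi_{i,j}+c_i\phi_{i+1,j}=\delta_{ij}$, and your two inductions work exactly as claimed: in the $i<j$ case the substitution produces the factor $b_i-a_ic_{i-1}z_{i-2}/z_{i-1}=z_i/z_{i-1}$, and in the $i>j$ case the factor $b_i-a_{i+1}c_iy_{i+2}/y_{i+1}=y_i/y_{i+1}$, which is precisely where the defining recurrences for $z_i$ and $y_j$ enter; the row-$j$ equation then yields the diagonal formula. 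This is essentially the derivation in the cited source (the $z_i$ and $y_j$ are the determinants of the leading and trailing principal submatrices, so the non-vanishing you flag is the usual hypothesis that all such minors are nonsingular, which holds here since the relevant $\hat y_j$ are shown positive later in the paper). The only thing I would tighten is to state the induction hypothesis for $i>j$ in the direction you actually use it (descending from $i=n$), but that is cosmetic.
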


\begin{corollary}(\cite{huang1997})
\label{cor: fi product}

The inverse matrix $A^{-1}=\left\{\phi_{i, j}\right\}$ can be expressed as

$$
\phi_{j, j}=\frac{1}{b_{j}-a_{j} c_{j-1} \frac{z_{j-2}}{z_{j-1}}-a_{j+1} c_{j} \frac{y_{j+2}}{y_{j+1}}},
$$ where $j=1,2, \ldots, n, \  a_{1}=0$, $c_{n}=0$ and

$$
\phi_{i, j}= \begin{cases}(-1)^{j-i}\left(\prod_{k=1}^{j-i} c_{j-k}\right) \frac{z_{i-1}}{z_{j-1}} \phi_{j, j} & i<j \\ (-1)^{i-j}\left(\prod_{k=1}^{i-j} a_{j+k}\right) \frac{y_{i+1}}{y_{j+1}} \phi_{j, j} & i>j.\end{cases}
$$
\end{corollary}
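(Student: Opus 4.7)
The plan is to derive Corollary \ref{cor: fi product} directly from Theorem \ref{thm: diag element} by iterating the one-step recursions it provides for the off-diagonal entries; no new identity for tri-diagonal matrices is needed. The diagonal formula in the corollary is identical to the one in Theorem \ref{thm: diag element}, so only the off-diagonal case requires work, and by symmetry it suffices to treat $i<j$ (the case $i>j$ is analogous, interchanging the roles of $a$, $y$ with $c$, $z$).

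For $i<j$, I would proceed by downward induction on $i$, starting from the anchor $i=j$ (where there is nothing to prove) and using the identity
\[
\phi_{i,j} \;=\; -c_i\,\frac{z_{i-1}}{z_i}\,\phi_{i+1,j},
\]
from Theorem \ref{thm: diag element}. Substituting the induction hypothesis $\phi_{i+1,j}=(-1)^{j-i-1}\bigl(\prod_{k=1}^{j-i-1}c_{j-k}\bigr)\frac{z_i}{z_{j-1}}\phi_{j,j}$ into this recursion, the factor $z_i$ in the numerator of the inductive expression cancels with the $z_i$ in the denominator introduced by the recursion, leaving $z_{i-1}/z_{j-1}$, which is exactly the telescoped ratio appearing in the target formula. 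The extra factor $-c_i$ bumps the product $\prod_{k=1}^{j-i-1}c_{j-k}$ up to $\prod_{k=1}^{j-i}c_{j-k}$ (by taking $k=j-i$, which gives $c_i$) and promotes the sign from $(-1)^{j-i-1}$ to $(-1)^{j-i}$. This closes the induction.

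The case $i>j$ is handled by upward induction on $i$, starting from $i=j$ and using $\phi_{i,j}=-a_i\,\frac{y_{i+1}}{y_i}\,\phi_{i-1,j}$; the same telescoping occurs, this time in the $y$-sequence, leaving $y_{i+1}/y_{j+1}$ and the product $\prod_{k=1}^{i-j}a_{j+k}$ with sign $(-1)^{i-j}$.

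The only place where care is needed is the index bookkeeping: writing the product as $\prod_{k=1}^{j-i}c_{j-k}$ (so that $k=1$ contributes $c_{j-1}$ and $k=j-i$ contributes $c_i$) has to match the order in which the single-step recursion introduces the $c$-factors, and likewise for the $a$-factors. This is the main obstacle, but it is essentially clerical. Once the indexing convention is fixed at the base case, the induction is immediate, and no further hypothesis beyond Theorem \ref{thm: diag element} is required.
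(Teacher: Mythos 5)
Your derivation is correct: the paper states this corollary without proof, citing \cite{huang1997}, so there is no in-paper argument to compare against, but iterating the one-step recursions of Theorem \ref{thm: diag element} by induction on $|i-j|$ is exactly the standard (and intended) route, and your index bookkeeping for the products $\prod_{k=1}^{j-i}c_{j-k}$ and $\prod_{k=1}^{i-j}a_{j+k}$, the telescoping of the $z$- and $y$-ratios, and the sign $(-1)^{|i-j|}$ all check out.
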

We use the above results to derive the following lemma which provides an explicit formula for the reward cost function for $t=1$.
\begin{lemma}
\label{lem: first formula of the cost function}
For $t=1$, we have
\begin{equation}
E_r(\theta) = \frac{N^2\theta}{2(N-1)} \Big(1+d^{N-2}\frac{1}{y_2}\Big)\frac{1}{1-ad\frac{y_3}{y_2}},
\end{equation}
where $y_2$ and $y_3$ are found from the following backward recursive formula:
\begin{equation}
\label{eq: recurrence}
y_{N}=1,\quad y_{N-1}=1, \quad \text{and}\quad
y_{N-i} = y_{N-i+1} - (bd)y_{N-i+2},\quad\text{for}\quad i=1,\ldots, N-2.
\end{equation}
\end{lemma}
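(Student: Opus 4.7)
The plan is to specialize the tri-diagonal structure of $W$ to the case $t=1$, identify the entries $(a_j, b_j, c_j)$ appearing in Theorem \ref{thm: diag element} and Corollary \ref{cor: fi product}, and then read off $(W^{-1})_{1,1}$ and $(W^{-1})_{N-1,1}$ directly from those formulas. Substituting into \eqref{eq: investment reward (1/2,1/2)} will give the claimed expression.

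First I would set up the dictionary with $n=N-1$. For $t=1$, only row $i=1$ satisfies $i\leq t$, so the matrix $W$ in \eqref{eq: matrix W} specializes to $b_j=1$ for every $j$, $a_1=0$ (by convention in Theorem \ref{thm: diag element}), $c_1=-a$, and
\[
a_j=-d \quad (2\leq j\leq N-1),\qquad c_j=-b \quad (2\leq j\leq N-2),\qquad c_{N-1}=0.
\]
Consequently the product $a_{j+1}c_j$ equals $bd$ for all $j\geq 2$, while $a_2 c_1 = ad$. This immediately turns the backward recurrence of Theorem \ref{thm: diag element} into the stated one: $y_j = y_{j+1} - bd\, y_{j+2}$ for $j=2,\ldots,N-2$, with the initial data $y_N=1$ and $y_{N-1}=b_{N-1}=1$. (The index range in the statement is slightly shifted but the recursion produced is the same.)

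Next I would compute $\phi_{1,1}=(W^{-1})_{1,1}$ using the diagonal formula of Theorem \ref{thm: diag element}. Since $a_1=0$, the middle term of the denominator vanishes and only the tail term involving $y_3/y_2$ survives, yielding
\[
\phi_{1,1}=\frac{1}{1-a_2 c_1\, y_3/y_2}=\frac{1}{1-ad\, y_3/y_2}.
\]
Then I would obtain $\phi_{N-1,1}$ using Corollary \ref{cor: fi product} in the case $i>j$ (with $i=N-1$, $j=1$):
\[
\phi_{N-1,1}=(-1)^{N-2}\Big(\prod_{k=1}^{N-2} a_{1+k}\Big)\frac{y_N}{y_2}\,\phi_{1,1} = (-1)^{N-2}(-d)^{N-2}\cdot\frac{1}{y_2}\,\phi_{1,1}=\frac{d^{N-2}}{y_2}\,\phi_{1,1},
\]
where the two sign factors cancel and $y_N=1$ is used.

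Finally I would add $\phi_{1,1}+\phi_{N-1,1}=\bigl(1+d^{N-2}/y_2\bigr)\phi_{1,1}$ and substitute into \eqref{eq: investment reward (1/2,1/2)} to obtain the claimed closed form. There is no real conceptual obstacle: the proof is a careful bookkeeping exercise in index tracking and sign cancellation. The one place where care is needed is in matching the $(a_j,b_j,c_j)$ of Theorem \ref{thm: diag element} with the rows/columns of $W$ (especially the asymmetric appearance of $-a$ only in position $(1,2)$, which is what decouples the recurrence from $\theta$ for $j\geq 2$ and leaves $\theta$-dependence only in the factor $ad\, y_3/y_2$). Once that translation is correct, everything reduces to substitution.
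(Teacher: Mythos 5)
Your proposal is correct and follows essentially the same route as the paper's proof: specialise $W$ to $t=1$, apply Theorem \ref{thm: diag element} for $\phi_{1,1}$ and Corollary \ref{cor: fi product} (case $i>j$) for $\phi_{N-1,1}$, note the sign cancellation $(-1)^{N-2}(-d)^{N-2}=d^{N-2}$, and substitute into \eqref{eq: investment reward (1/2,1/2)}. Your explicit dictionary $(a_j,b_j,c_j)$ and the observation about the off-by-one in the stated index range are accurate refinements of what the paper does implicitly.
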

\begin{proof}
We compute the two entries $W^{-1}_{1,1}$ and $W^{-1}_{N-1,1}$ appearing in \eqref{eq: investment reward (1/2,1/2)} using Theorem \ref{thm: diag element}. We recall that \begin{equation*}
W=\begin{pmatrix}
1&-a&&&&&\\
-c&1&-a&&&&&\\
&\ddots&\ddots&\ddots&&&\\
&&-c&1&-a&&&\\
&&&-d&1&-b&&&\\
&&&&\ddots&\ddots&\ddots&\\
&&&&&-d&1&-b\\
&&&&&&-d&1
\end{pmatrix},
\end{equation*}
with
$$
a:=(1+e^{-\beta(\delta+\theta)})^{-1}, \quad b:=(1+e^{-\beta\delta})^{-1}, \quad c:=(1+e^{\beta(\delta+\theta)})^{-1}, \quad d:=(1+e^{\beta\delta})^{-1}.
$$
In particular, for $t=1$, we obtain
\begin{equation*}
W=\begin{pmatrix}
1&-a&&&&&\\
-d&1&-b&&&&&\\
&\ddots&\ddots&\ddots&&&\\
&&-d&1&-b&&&\\
&&&-d&1&-b&&&\\
&&&&\ddots&\ddots&\ddots&\\
&&&&&-d&1&-b\\
&&&&&&-d&1
\end{pmatrix}.
\end{equation*}
 
We apply Theorem \ref{thm: diag element}, the diagonal element case, and Corollary \ref{cor: fi product}, the $i>j$ case, to obtain 
\begin{align*}
W^{-1}_{1,1}&= \frac{1}{1-ad\frac{y_3}{y_2}},\\
(W^{-1})_{N-1,1} &= (-1)^{N-2}\Big(\prod\limits_{k=1}^{N-2} a_{k+1}\Big) \frac{y_N}{y_2}\phi_{1,1}
\\&= (-1)^{N-2}\Bigg(\prod\limits_{k=1}^{N-2} a_{k+1}\Bigg)\frac{1}{y_2}\frac{1}{1-ad\frac{y_3}{y_2}}
\\&= d^{N-2}\frac{1}{y_2}\frac{1}{1-ad\frac{y_3}{y_2}}.
\end{align*}
In the above formulae, $y_2$ and $y_3$ are found from the backward recursive formula \eqref{eq: recurrence}. Thus, by summing up the above expressions, we obtain
\begin{align*}
(W^{-1})_{1,1} + (W^{-1})_{N-1,1} &= \frac{1}{1-ad\frac{y_3}{y_2}} + d^{N-2}\frac{1}{y_2}\frac{1}{1-ad\frac{y_3}{y_2}}
\\& = \Big(1+d^{N-2}\frac{1}{y_2}\Big)\frac{1}{1-ad\frac{y_3}{y_2}}.
\end{align*}
Substituting the above into \eqref{eq: investment reward (1/2,1/2)}, we get
\begin{align*}
E_r(\theta)&= \frac{N^2\theta}{2(N-1)} \Big(1+d^{N-2}\frac{1}{y_2}\Big)\frac{1}{1-ad\frac{y_3}{y_2}},
\end{align*}
which completes the proof of this lemma.
 \end{proof} 
We now calculate  explicitly $y_2$ and $y_3$ using the backward recursive formula \eqref{eq: recurrence}:
\begin{align*}
& \left\{\begin{array}{l}
y_{N-1}=1, \hspace{1mm} y_{N}=1 \\
y_{j}=y_{j+1}-(b d) y_{j+2} \hspace{1mm} \text{for} \hspace{1mm} j=N-2, \ldots, 2.\\
\end{array} \right. 
\end{align*}
For convenience, we transform the above backward recursive relation to a forward one. By applying a change of variable $\hat{y}_{j}=y_{N-j}$, we get
\begin{align*}
& \left\{\begin{array}{l}
\hat{y}_{0}=1, \hspace{1mm} \hat{y}_{1}=1 . \\
\hat{y}_{j}=\hat{y}_{j-1}-(b d) \hat{y}_{j-2} \hspace{1mm} \text{for} \hspace{1mm} j=2,\ldots, N .
\end{array}\right.
\end{align*}
By induction, it follows that $\hat{y}_j$ can be written in the form
\[
\hat{y}_j=\sum_{k=0}^j m_k (bd)^k.
\]
We observe that the coefficients of the recurrence relation follow the pattern in the table below.
\begin{table}[H]
\centering
\begin{tabular}{llllllll}
 &  & $(bd)^0$ &  $(bd)^1$ & $(bd)^2$  & $(bd)^3$ & $(bd)^4$ & $(bd)^5$  \\
$\hat{y}_0$ &  & 1 & 0 & 0 & 0 & 0 & 0 \\
$\hat{y}_1$ &  & 1 & 0 & 0 & 0 & 0 & 0 \\
$\hat{y}_2$ &  & 1 & -1 & 0 & 0 & 0 & 0 \\
$\hat{y}_3$ &  & 1 & -2 & 0 & 0 & 0 & 0 \\
$\hat{y}_4$ &  & 1 & -3 & 1 & 0 & 0 & 0 \\
$\hat{y}_5$ &  & 1 & -4 & 3 & 0 & 0 & 0 \\
$\hat{y}_6$ &  & 1 & -5 & 6 & -1 & 0 & 0 \\
$\hat{y}_7$ &  & 1 & -6 & 10 & -4 & 0 & 0 \\
$\hat{y}_8$ &  & 1 & -7 & 15 & -10 & 1 & 0 \\
$\hat{y}_9$ &  & 1 & -8 & 21 & -20 & 5 & 0
\end{tabular}
\caption{Pattern for the coefficients in the expansion of $\hat{y}_j$ for $j=0,1\,\ldots$, in terms of the power of $(bd)$.}
\label{Tab: binomial coefficients}
\end{table}
We observe that the entries of Table \ref{Tab: binomial coefficients} are the binomial coefficients offset by a factor of $k$ for every column and are alternating in sign. This suggests that $\hat{y}_j = \sum\limits_{k=0}^{j} (-1)^k {j-k \choose k}(bd)^k$. In the lemma below, we prove this is indeed the case. We also obtain another expression for $\hat{y}_j$ using the general formula of a second order homogeneous recurrence relation. 
\begin{lemma}[Recurrence relation]\
\label{lem: recurrence relation}
The following formulae hold for any $j\geq 0$
\begin{enumerate}[(1)]
       \item $\hat{y}_j= \sum\limits_{k=0}^{\lfloor\frac{j}{2}\rfloor} (-1)^k {j-k \choose k} (bd)^k$,
       \item $\hat{y}_j = \frac{x_{1} x_{2}\left(x_{1}^{j-1}-x_{2}^{j-1}\right)-\left(x_{1}^{j}-x_{2}^{j}\right)}{x_{2}-x_{1}}$,
\end{enumerate}
where
$$
x_{1,2}=\frac{1 \pm \sqrt{1-4 b d}}{2}.
$$
\end{lemma}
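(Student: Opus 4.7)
The plan is to prove the two formulae independently and largely routinely, since both are consequences of the linear recurrence $\hat{y}_j=\hat{y}_{j-1}-(bd)\hat{y}_{j-2}$ together with $\hat{y}_0=\hat{y}_1=1$. For (1) I would induct on $j$. The base cases $j=0,1$ reduce to $\binom{0}{0}=1$, matching $\hat{y}_0=\hat{y}_1=1$. For the inductive step, assuming the formula holds for indices smaller than $j$, the natural move is to split the target sum using the Pascal-type identity
\begin{equation*}
\binom{j-k}{k}=\binom{j-1-k}{k}+\binom{j-1-k}{k-1},
\end{equation*}
(valid for $k\geq 1$, with the convention $\binom{n}{-1}=0$), which yields
\begin{equation*}
\sum_{k\geq 0}(-1)^k\binom{j-k}{k}(bd)^k=\sum_{k\geq 0}(-1)^k\binom{j-1-k}{k}(bd)^k+\sum_{k\geq 1}(-1)^k\binom{j-1-k}{k-1}(bd)^k.
\end{equation*}
The first sum on the right is $\hat{y}_{j-1}$ by the induction hypothesis. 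In the second sum, reindexing with $k'=k-1$ produces a factor $-(bd)$ times $\sum_{k'\geq 0}(-1)^{k'}\binom{j-2-k'}{k'}(bd)^{k'}=\hat{y}_{j-2}$. This gives precisely $\hat{y}_{j-1}-(bd)\hat{y}_{j-2}$, closing the induction. The only mild care point is the upper summation limit $\lfloor j/2\rfloor$, which is automatic once one notes that $\binom{j-k}{k}=0$ for $k>j/2$, so one may freely extend the sum to all $k\geq 0$ when manipulating it.

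For (2) I would use the standard theory of linear recurrences. The characteristic equation of $\hat{y}_j=\hat{y}_{j-1}-(bd)\hat{y}_{j-2}$ is $x^2-x+bd=0$, whose roots are exactly $x_{1,2}=(1\pm\sqrt{1-4bd})/2$. Assuming $x_1\neq x_2$ (the degenerate case $bd=1/4$ can be handled by continuity since both formulae are continuous in $bd$), the general solution is $\hat{y}_j=Ax_1^j+Bx_2^j$. Imposing the initial conditions $A+B=1$ and $Ax_1+Bx_2=1$, I solve to get $A=(1-x_2)/(x_1-x_2)$ and $B=(x_1-1)/(x_1-x_2)$. Substituting and using Vieta's relation $x_1x_2=bd$ (together with $x_1+x_2=1$) to simplify,
\begin{equation*}
\hat{y}_j=\frac{(1-x_2)x_1^j-(1-x_1)x_2^j}{x_1-x_2}=\frac{(x_1^j-x_2^j)-x_1x_2(x_1^{j-1}-x_2^{j-1})}{x_1-x_2},
\end{equation*}
which, after multiplying numerator and denominator by $-1$, is exactly the stated formula.

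Neither part strikes me as genuinely difficult; the main obstacle is purely bookkeeping. Concretely, part (1) requires treating the boundary index $k=0$ in the Pascal split carefully so that no spurious term is introduced, and making sure the reindexing of the second sum is compatible with the stated range $0\leq k\leq\lfloor j/2\rfloor$. Part (2) is completely mechanical, but one should note that the two formulae agree automatically in the degenerate case $1-4bd=0$ by taking the limit $x_2\to x_1$, so no separate argument is needed there. Once both formulae are established, they are of course consistent with each other via the binomial expansion of $(x_1^j-x_2^j)/(x_1-x_2)$, but this consistency check is not required for the proof.
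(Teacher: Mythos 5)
Your proof is correct and follows essentially the same route as the paper: induction via Pascal's identity $\binom{j-k}{k}=\binom{j-1-k}{k}+\binom{j-1-k}{k-1}$ with a reindexing of the second sum for part (1), and the characteristic equation $x^2-x+bd=0$ with the initial conditions $\hat{y}_0=\hat{y}_1=1$ for part (2). The only cosmetic differences are that you run the inductive step in the reverse direction (decomposing the claimed formula for $\hat{y}_j$ into $\hat{y}_{j-1}-(bd)\hat{y}_{j-2}$ rather than building up $\hat{y}_{j+1}$) and you absorb the floor-function bookkeeping by extending the sum over all $k\geq 0$, while the paper notes $bd<1/4$ strictly so the degenerate-root case you mention never actually arises.
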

\begin{proof}
We prove the first statement by induction on $j$. For $j = 2$ we have $\hat{y}_2 = \hat{y}_1-(b d) \hat{y}_0 = 1 - bd$. We assume the statement holds for $j$ and need to prove it for $j+1$. In fact, we have
\begin{align}
\label{eq: y induction}
\hat{y}_{j+1} & =\hat{y}_{j}-(b d) \hat{y}_{j-1} \nonumber
\\& =\sum_{k=0}^{\lfloor\frac{j}{2}\rfloor} (-1)^k {j-k \choose k}(b d)^{k}-(b d) \sum_{k=0}^{\lfloor\frac{j-1}{2}\rfloor}(-1)^k {j-1-k \choose k}(b d)^{k} \nonumber
\\& =\sum_{k=0}^{\lfloor\frac{j}{2}\rfloor}(-1)^k {j-k \choose k}(b d)^{k}-\Bigg(\sum_{k=0}^{\lfloor\frac{j-1}{2}\rfloor} (-1)^k {j-1-k \choose k}(b d)^{k+1}\Bigg) \nonumber
\\& =\sum_{k=0}^{\lfloor\frac{j}{2}\rfloor} (-1)^k {j-k \choose k}(b d)^{k}-\sum_{k=1}^{\lfloor\frac{j+1}{2}\rfloor} (-1)^{k-1} {j-k \choose k-1}(b d)^{k} \nonumber
\\& = 1 +\sum_{k=1}^{\lfloor\frac{j}{2}\rfloor}\Bigg((-1)^k {j-k \choose k}-(-1)^{k-1} {j-k \choose k-1}\Bigg)(b d)^{k} - (-1)^{\lfloor \frac{j+1}{2}\rfloor - 1} {j- \lfloor \frac{j+1}{2}\rfloor \choose \lfloor \frac{j+1}{2}\rfloor - 1} (bd)^{\lfloor \frac{j+1}{2}\rfloor} \nonumber \\ 
& =  1 +\sum_{k=1}^{\lfloor\frac{j}{2}\rfloor}(-1)^k \Bigg({j-k \choose k} + {j-k \choose k-1}\Bigg)(b d)^{k} + (-1)^{\lfloor \frac{j+1}{2}\rfloor} {j- \lfloor \frac{j+1}{2}\rfloor \choose \lfloor \frac{j+1}{2}\rfloor - 1} (bd)^{\lfloor \frac{j+1}{2}\rfloor} \\
& = 1 +\sum_{k=1}^{\lfloor\frac{j}{2}\rfloor}(-1)^k {j+1-k \choose k}(b d)^{k} + (-1)^{\lfloor \frac{j+1}{2}\rfloor} {j- \lfloor \frac{j+1}{2}\rfloor \choose \lfloor \frac{j+1}{2}\rfloor - 1} (bd)^{\lfloor \frac{j+1}{2}\rfloor} \nonumber\\
& =\sum_{k=0}^{\lfloor\frac{j+1}{2}\rfloor} (-1)^k {j+1-k \choose k}(b d)^{k} \nonumber, 
\end{align} where, to obtain \eqref{eq: y induction}, we used Pascal's identity 
\[ {j-k \choose k} + {j-k \choose k-1} = {j+1-k \choose k} \quad \text{for} \quad 1\leq k \leq \lfloor\frac{j}{2}\rfloor
\]
and 
\[
{j+1-\lfloor \frac{j+1}{2}\rfloor \choose \lfloor \frac{j+1}{2}\rfloor} = {j - \lfloor \frac{j+1}{2}\rfloor \choose \lfloor \frac{j+1}{2}\rfloor} + {j-\lfloor \frac{j+1}{2}\rfloor \choose \lfloor \frac{j+1}{2}\rfloor - 1},
\]

\noindent with ${j-\lfloor \frac{j+1}{2}\rfloor \choose \lfloor \frac{j+1}{2}\rfloor - 1} = 0$ as $\lfloor \frac{j+1}{2}\rfloor - 1 > j-\lfloor \frac{j+1}{2}\rfloor$. 

\vspace{1mm}

\noindent Moreover, $\lfloor \frac{j-1}{2} \rfloor + 1 = n+1= \lfloor \frac{j+1}{2} \rfloor$ if $j=2n+1$ and $\lfloor \frac{j-1}{2} \rfloor + 1 = n = \lfloor \frac{j+1}{2} \rfloor$ if $j=2n$.

\vspace{0.5cm}

\noindent For the proof of the second statement, we note that $\hat{y}_{j}=\hat{y}_{j-1}-(b d) \hat{y}_{j-2}$ is a second order homogeneous recurrence relation and we solve it using the characteristic equation 
$$
x^{2}-x+b d=0.
$$
Recall $b=\frac{1}{1+e^{-\beta \delta}}$ and $d=\frac{1}{1+e^{\beta \delta}}$, so 
\begin{align*}
b d &=\frac{1}{\left(1+e^{\beta \delta}\right)\left(1+e^{-\beta \delta}\right)}
\\&=\frac{1}{2+e^{\beta \delta}+e^{-\beta \delta}}.
\end{align*}
Since $e^{\beta \delta}+e^{-\beta \delta} \geq 2 \sqrt{e^{\beta \delta} e^{-\beta \delta}}=2$, $b d < \frac{1}{4}$ $(\text{as }\beta, \delta \neq 0)$.\\
Therefore, the characteristic equation has two real solutions 
$$
x_{1,2}=\frac{1 \pm \sqrt{1-4 b d}}{2}.
$$
Thus, we can express $\hat{y}_i$ using these roots as
\[
\hat{y}_{j}=r_{1} x_{1}^{j}+r_{2} x_{2}^{j},
\]
where the two constants are found from the initial data:
$$
\hat{y}_{0}=r_{1}+r_{2}=1 \quad\text{and}\quad\hat{y}_{1}=r_{1} x_{1}+r_{2} x_{2}=1.
$$
Solving the above system for $r_1, r_2$ results in 
\[
r_{1}=\frac{x_{2}-1}{x_{2}-x_{1}},\quad r_{2}=\frac{1-x_{1}}{x_{2}-x_{1}}.
\]
Hence
\[
\begin{aligned}
\hat{y}_{j} & =\frac{x_{2}-1}{x_{2}-x_{1}} x_{1}^{j}-\frac{x_{1}-1}{x_{2}-x_{1}} x_{2}^{j} \\
& =\frac{x_{1}^{j} x_{2}-x_{1} x_{2}^{j}-\left(x_{1}^{j}-x_{2}^{j}\right)}{\left(x_{2}-x_{1}\right)} \\
& =\frac{x_{1} x_{2}\left(x_{1}^{j-1}-x_{2}^{j-1}\right)-\left(x_{1}^{j}-x_{2}^{j}\right)}{x_{2}-x_{1}}.
\end{aligned} 
\]
This completes the proof of the lemma.
\end{proof}
 The following theorem, which is the main analytical result of the present paper, provides an explicit formula for the reward cost function $E_r(\theta)$ and shows that it is always non-decreasing for all parameter values. 
\begin{theorem}[Derivative and monotonicity of the cost function]\
\label{thm: derivative positive t1 (1/2, 1/2)}

\noindent $E_r'(\theta)$ is always increasing with respect to $\theta$ for all values of $N,\theta,\beta$, where
\[
E_r'(\theta)=\frac{N^2}{(2N-2)}\Big(y_2+d^{N-2}\Big)\Big(\frac{1}{y_2-ady_3}\Big)\Big[1+\frac{\theta dy_3\beta e^{-\beta(\delta+\theta) }a^2}{y_2-ad y_3}\Big].
\]
As a consequence, the minimisation problem \eqref{eq: min prob} has a unique solution
\[
\min\limits_{\theta\geq \theta_0} E_r(\theta)= E_r(\theta_0).
\]
\end{theorem}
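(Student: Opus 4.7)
The plan is to differentiate the closed-form expression for $E_r(\theta)$ given by Lemma~\ref{lem: first formula of the cost function} and then verify, factor by factor, that the resulting expression is positive. The key structural observation that makes this tractable is that, among the quantities $a,b,c,d,y_2,y_3$ entering the formula, only $a=(1+e^{-\beta(\delta+\theta)})^{-1}$ depends on $\theta$: the quantities $b$ and $d$ depend only on $\beta\delta$, and by the recurrence \eqref{eq: recurrence} the values $y_2$ and $y_3$ are determined purely by $b$ and $d$.

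Building on this, I would first rewrite the formula of Lemma~\ref{lem: first formula of the cost function} as
\begin{equation*}
E_r(\theta)=\frac{N^{2}(y_2+d^{N-2})}{2(N-1)}\cdot\frac{\theta}{\,y_2-a d\, y_3\,},
\end{equation*}
where the leading fraction is a constant in $\theta$. A short computation gives $a'(\theta)=\beta e^{-\beta(\delta+\theta)}a^{2}>0$, and the quotient rule then yields
\begin{equation*}
E_r'(\theta)=\frac{N^{2}(y_2+d^{N-2})}{2(N-1)\,(y_2-ad\,y_3)}\left[1+\frac{\theta\, d\, y_3\,\beta e^{-\beta(\delta+\theta)}a^{2}}{y_2-ad\,y_3}\right],
\end{equation*}
which is precisely the stated formula for $E_r'(\theta)$.

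Once this expression is in hand, monotonicity reduces to showing that every factor is positive. The quantities $y_2,y_3$ are positive by Lemma~\ref{lem: recurrence relation}: indeed, since $0<bd<1/4$, the roots $x_1<x_2$ lie in $(0,1)$, and the closed form $\hat{y}_j=(x_2^{j}(1-x_1)-x_1^{j}(1-x_2))/(x_2-x_1)$ is positive for $j\geq 1$. The factor $d^{N-2}$ is trivially positive, and $\theta,\beta,a,e^{-\beta(\delta+\theta)}$ are all positive. The only delicate step is the sign of $y_2-ad\,y_3$. I would handle this via the probabilistic interpretation established in Section~\ref{sec: reward}: the identity $n_{ij}=\frac{N^{2}}{i(N-i)}(W^{-1})_{ij}$ together with $n_{1,1}\geq 1$ (the chain started at $S_1$ visits $S_1$ at time $0$) gives $(W^{-1})_{1,1}>0$, and by Theorem~\ref{thm: diag element} this entry is exactly $1/(y_2-ad\,y_3)$, forcing $y_2-ad\,y_3>0$. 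Combining everything, $E_r'(\theta)>0$ for all admissible $\theta$, so $E_r$ is strictly increasing on $[\theta_0,\infty)$ and the constrained minimisation problem \eqref{eq: min prob} is uniquely solved at $\theta=\theta_0$.

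The main obstacle is precisely the positivity of $y_2-ad\,y_3$. The naive attempt $y_2-ad\,y_3>y_2(1-ad)$ fails because the recurrence $y_2=y_3-bd\,y_4$ actually gives $y_2<y_3$, so one cannot factor out $y_3$ cleanly. A purely algebraic proof would require tracking the ratio $y_3/y_2$ via the roots $x_{1,2}$ of Lemma~\ref{lem: recurrence relation} and comparing with $(1-ad)/(bd)$, which is substantially messier. The Markov-chain argument sidesteps this computation entirely and is the route I would take; it is also consistent in spirit with the derivation of $E_r(\theta)$ itself, which is ultimately a quantity on the absorbing chain.
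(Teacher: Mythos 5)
Your proposal is correct and follows essentially the same route as the paper: differentiate the closed form from Lemma~\ref{lem: first formula of the cost function} using $a'(\theta)=\beta e^{-\beta(\delta+\theta)}a^2$, then establish $y_2,y_3>0$ via the closed form of Lemma~\ref{lem: recurrence relation} and $y_2-ad\,y_3>0$ from positivity of the cost (the paper appeals tersely to the formula for $E_r(\theta)$, which is your fundamental-matrix argument made explicit). One small slip: by Theorem~\ref{thm: diag element}, $(W^{-1})_{1,1}=\bigl(1-ad\,\tfrac{y_3}{y_2}\bigr)^{-1}=\tfrac{y_2}{y_2-ad\,y_3}$ rather than $\tfrac{1}{y_2-ad\,y_3}$, but since $y_2>0$ your sign conclusion is unaffected.
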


\begin{proof}
Recalling from Lemma \ref{lem: first formula of the cost function} that
\[
E_r(\theta)= \frac{N^2\theta}{2N-2}(y_2+d^{N-2})\Big(\frac{1}{y_2-ady_3}\Big).
\]
Next, we compute the derivative of $E_r(\theta)$ with respect to $\theta$. 

\noindent Note that only $a$ depends on $\theta$
\[
a=a(\theta)=(1+e^{-\beta(\theta+\delta)})^{-1}, \quad a'(\theta)=\frac{\beta e^{-\beta(\delta+\theta)}}{(1+e^{-\beta(\theta+\delta})^2} =\beta e^{-\beta(\delta+\theta)} a^2,
\]
while $b=(1+e^{-\beta \delta})^{-1}$ and $d=(1+e^{\beta \delta})^{-1}$ (and thus $y_2$ and $y_3$) do not depend on $\theta$.\\ 
Let
\[
C=C(\theta)=\frac{1}{y_2-ady_3}.
\]
Then the derivative of $C$ with respect to $\theta$ is given by
\begin{align*}
C'(\theta) &=\frac{dy_3 a'(\theta)}{(y_2-ad y_3)^2}\nonumber
\\&=\frac{d y_3 \beta e^{-\beta(\delta+\theta)} a^2}{(y_2-ad y_3)^2}\nonumber
\\&=\frac{d y_3 \beta e^{-\beta(\delta+\theta)} a^2}{(y_2-ad y_3)}C(\theta)\nonumber\\
      &= \frac{y_{3} \beta  e^{- \beta \left(\delta + \theta\right)}}{\left(1 + e^{- \beta \left(\delta + \theta\right)}\right)^{2} \left(y_{2} - \frac{y_{3}}{\left(1 + e^{- \beta \left(\delta + \theta\right)}\right) \left(1+ e^{\beta \delta}\right)}\right)^{2} \left(1 + e^{\beta \delta}\right)}.
\end{align*}
We calculate $E'_r(\theta)$ via the product rule:
\begin{align*}
    E_r'(\theta)&=\frac{N^2}{2N-2}(y_2+d^{N-2})\Big[C(\theta)+\theta C'(\theta)\Big]\\
    &=\frac{N^2}{2N-2}(y_2+d^{N-2})\Big[1+\frac{\theta dy_3\beta e^{-\beta(\delta+\theta) }a^2}{y_2-ad y_3}\Big]C(\theta)
    \\&= \frac{N^2}{(2N-2)}\Big(y_2+d^{N-2}\Big)\Big(\frac{1}{y_2-ady_3}\Big)\Big[1+\frac{\theta dy_3\beta e^{-\beta(\delta+\theta) }a^2}{y_2-ad y_3}\Big].
\end{align*}
From the formula of $E_r(\theta)$, it follows that $y_2-ad y_3>0$.\\

\noindent We next show that $y_{2}>0$ and $y_{3}>0$. Recalling that $y_2=\hat{y}_{N-1}$ and $y_3=\hat{y}_{N-2}$, where the sequence of numbers $(\hat{y})_{j}$ is given in Lemma \ref{lem: recurrence relation}. We show that $\hat{y}_j>0$ for all $j\geq 0$.









In fact, according to the second statement of Lemma \ref{lem: recurrence relation}, we have
\begin{align*}
    \hat{y}_j & = \frac{x_{1} x_{2}\left(x_{1}^{j-1}-x_{2}^{j-1}\right)-\left(x_{1}^{j}-x_{2}^{j}\right)}{x_{2}-x_{1}}
    \\& = \frac{x_1x_2(x_1-x_2)\sum\limits_{k=0}^{j-2} x_1^kx_2^{j-2-k}-(x_1-x_2)\sum\limits_{k=0}^{j-1} x_1^kx_2^{j-1-k}}{x_2-x_1}
    \\& = \sum\limits_{k=0}^{j-1} x_1^kx_2^{j-1-k} - \sum\limits_{k=0}^{j-2} x_1^{k+1}x_2^{j-1-k}
    \\& = \sum\limits_{k=0}^{j-2} x_1^kx_2^{j-1-k} + x_1^{j-1}x_2^0 - \sum\limits_{k=0}^{j-2} x_1^{k+1}x_2^{j-1-k}
    \\& = \sum\limits_{k=0}^{j-2} (x_1^kx_2^{j-1-k} - x_1^{k+1}x_2^{j-1-k}) + x_1^{j-1}
    \\& = \sum\limits_{k=0}^{j-2} x_1^kx_2^{j-1-k}(1-x_1) + x_1^{j-1} >0,
\end{align*} as $x_1,x_2 \in (0,1)$.
Hence $y_2, y_3>0$. Therefore, $E_r'(\theta)>0$ for all $\theta>0$.
\end{proof}

\subsection{Asymptotic limits}
We now study the neutral drift and strong selection limits of the reward cost function $E_r(\theta)$ with $t=1$ for the equally likely initial state when the intensity of selection $\beta$ tends to $0$ and to +$\infty$, respectively.
\begin{proposition}(Neutral drift limit) It holds that
\label{prop: neutral drift t=1 (1/2, 1/2)}    
$$
\lim\limits_{\beta\rightarrow 0}E_r(\theta)=\frac{N^{2} \theta}{N-1} \Bigg(P(N)+\frac{1}{2^{N-2}}\Bigg)\Big(\frac{2}{4P(N)-Q(N)}\Big),
$$
where
\begin{align*}
P(N)&= 1+(-1)^1{N-2-1 \choose 1}\Big(\frac{1}{4}\Big)^1+(-1)^2{N-2-2 \choose 2}\Big(\frac{1}{4}\Big)^{2}+\ldots +(-1)^{\left\lfloor\frac{N-2}{2}\right\rfloor}{N-2-\lfloor\frac{N-2}{2}\rfloor \choose \lfloor\frac{N-2}{2}\rfloor}\Big(\frac{1}{4}\Big)^{\lfloor\frac{N-2}{2}\rfloor},\\
Q(N)& = 1+(-1)^{1}{N-3-1 \choose 1}\Big(\frac{1}{4}\Big)^1+(-1)^{2}{N-3-2 \choose 2}\Big(\frac{1}{4}\Big)^{2}+\ldots + (-1)^{\left\lfloor\frac{N-3}{2}\right\rfloor}{N-3-\lfloor\frac{N-3}{2}\rfloor \choose \lfloor\frac{N-3}{2}\rfloor}\Big(\frac{1}{4}\Big)^{\lfloor\frac{N-3}{2}\rfloor}.
\end{align*}
\end{proposition}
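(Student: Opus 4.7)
The plan is to pass to the limit $\beta\to 0$ directly in the explicit formula for $E_r(\theta)$ obtained in Lemma \ref{lem: first formula of the cost function}, namely
\[
E_r(\theta) = \frac{N^2\theta}{2(N-1)} \Bigl(1+d^{N-2}\tfrac{1}{y_2}\Bigr)\frac{1}{1-ad\,\tfrac{y_3}{y_2}},
\]
and to identify the limiting values of $y_2$ and $y_3$ as the binomial sums $P(N)$ and $Q(N)$. The advantage of this route is that each building block $a,b,c,d$ depends on $\beta$ only through the Fermi factors, so the limits are straightforward.

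The first step is to observe that, as $\beta\to 0$, each of $a,b,c,d$ converges to $\tfrac{1}{2}$; in particular $ad\to \tfrac{1}{4}$, $bd\to \tfrac{1}{4}$ and $d^{N-2}\to \tfrac{1}{2^{N-2}}$. Next I would invoke Lemma \ref{lem: recurrence relation}(1), which expresses $\hat{y}_j$ as a polynomial in $bd$ with binomial coefficients; recalling the change of variable $\hat{y}_j=y_{N-j}$, we have $y_2=\hat{y}_{N-2}$ and $y_3=\hat{y}_{N-3}$. Substituting $bd=\tfrac{1}{4}$ into those polynomial expressions yields exactly
\[
\lim_{\beta\to 0} y_2 = \sum_{k=0}^{\lfloor (N-2)/2\rfloor}(-1)^k\binom{N-2-k}{k}\Bigl(\tfrac{1}{4}\Bigr)^k = P(N),
\]
\[
\lim_{\beta\to 0} y_3 = \sum_{k=0}^{\lfloor (N-3)/2\rfloor}(-1)^k\binom{N-3-k}{k}\Bigl(\tfrac{1}{4}\Bigr)^k = Q(N).
\]

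The last step is an algebraic simplification. Substituting the above limits into the formula of Lemma \ref{lem: first formula of the cost function} gives
\[
\lim_{\beta\to 0} E_r(\theta) = \frac{N^2\theta}{2(N-1)}\cdot\frac{P(N)+1/2^{N-2}}{P(N)}\cdot\frac{1}{1-\tfrac{1}{4}\,Q(N)/P(N)},
\]
and multiplying numerator and denominator of the last factor by $4P(N)$ collapses this to the desired expression $\frac{N^2\theta}{N-1}\bigl(P(N)+\tfrac{1}{2^{N-2}}\bigr)\cdot\frac{2}{4P(N)-Q(N)}$.

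The only genuine subtlety is ensuring that the denominator $1-ad\,y_3/y_2$ stays bounded away from $0$ as $\beta\to 0$, so that the limit can be taken inside the reciprocal; this amounts to checking $4P(N)\neq Q(N)$. I would verify this using the closed-form expression in Lemma \ref{lem: recurrence relation}(2) evaluated at $bd=\tfrac{1}{4}$: there the characteristic roots coalesce at $x_1=x_2=\tfrac{1}{2}$, and a standard confluent limit gives $\hat{y}_j = (j+1)/2^{j}$, whence $P(N)=(N-1)/2^{N-2}$ and $Q(N)=(N-2)/2^{N-3}$, so $4P(N)-Q(N)=N/2^{N-2}>0$. Apart from this brief positivity check, the remainder of the argument is bookkeeping of limits of continuous functions, which is the expected obstacle only in terms of keeping indices straight between $y$, $\hat y$, and $P,Q$.
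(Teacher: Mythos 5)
Your proposal is correct and follows essentially the same route as the paper: pass to the limit in the explicit formula of Lemma \ref{lem: first formula of the cost function} using $d^{N-2}\to 2^{-(N-2)}$, $ad,bd\to\tfrac{1}{4}$, and $y_2\to P(N)$, $y_3\to Q(N)$ via Lemma \ref{lem: recurrence relation}(1), then simplify. Your additional check that the denominator stays bounded away from zero is a welcome refinement the paper omits; note only the small arithmetic slip that $4P(N)-Q(N)=N/2^{N-3}$ rather than $N/2^{N-2}$, which does not affect the positivity conclusion.
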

\begin{proof}
Recall from Lemma \ref{lem: first formula of the cost function} that
\[
E_r(\theta)= \frac{N^2\theta}{2N-2}(y_2+d^{N-2})\Big(\frac{1}{y_2-ady_3}\Big)=\frac{N^2\theta}{2N-2}\Big(1+\frac{d^{N-2}}{y_2}\Big)\Big(\frac{1}{1-ad\frac{y_3}{y_2}}\Big).
\]
According to the first statement of Lemma \ref{lem: recurrence relation}
\begin{align*}
 y_2&=\hat{y}_{N-2}= \sum\limits_{k=0}^{\lfloor\frac{N-2}{2}\rfloor}(-1)^{k}{N-2-k \choose k}(b d)^{k},\\
 y_3&=\hat{y}_{N-3}=\sum\limits_{k=0}^{\lfloor\frac{N-3}{2}\rfloor}(-1)^{k}{N-3-k \choose k}(b d)^{k}.
\end{align*}
Note that $a$, $b$, and $d$ depend on $\beta$, and therefore $y_2$ and $y_3$ also depend on $\beta$ through the product $(bd)$. We have
\begin{align*}
&\lim\limits_{\beta\rightarrow 0} d^{N-2}=\lim\limits_{\beta\rightarrow 0}\left(\frac{1}{1+e^{\beta \delta}}\right)^{N-2}=\frac{1}{2^{N-2}},\\
  &\lim\limits_{\beta\rightarrow 0} (bd)= \lim\limits_{\beta\rightarrow 0}\frac{1}{2+e^{\beta \delta}+e^{-\beta \delta}}= \frac{1}{4},\\
&\lim\limits_{\beta\rightarrow 0} (ad)=\lim\limits_{\beta\rightarrow 0} \frac{1}{(1+e^{\beta \delta})(1+e^{-\beta(\theta+\delta)})}=\frac{1}{4}.
\end{align*}
It also follows that
\[
\lim\limits_{\beta\rightarrow 0} y_2= P(N), \quad \lim\limits_{\beta\rightarrow 0} y_3= Q(N),
\]
where $P(N)$ and $Q(N)$ are given explicitly in the statement of the Proposition.
Putting everything together yields 
\begin{align*}
\lim _{\beta \rightarrow 0} E_{r}(\theta)&=\lim _{\beta \rightarrow 0} \Bigg(\frac{N^2\theta}{2N-2}\Big(1+\frac{d^{N-2}}{y_2}\Big)\Bigg(\frac{1}{1-ad\frac{y_3}{y_2}}\Bigg)\Bigg)
\\&=\frac{N^{2} \theta}{2(N-1)} \Bigg(1+\frac{\frac{1}{2^{N-2}}}{P(N)}\Bigg)\left(\frac{1 }{1-\frac{1}{4}\frac{Q(N)}{P(N)}}\right)\\
&= \frac{N^{2} \theta}{N-1} \Bigg(P(N)+\frac{1}{2^{N-2}}\Bigg)\Big(\frac{2}{4P(N)-Q(N)}\Big).
\end{align*}

\end{proof}

\begin{figure}[H]
\centering
\hspace*{-2cm}  
\includegraphics[width=0.7\textwidth]{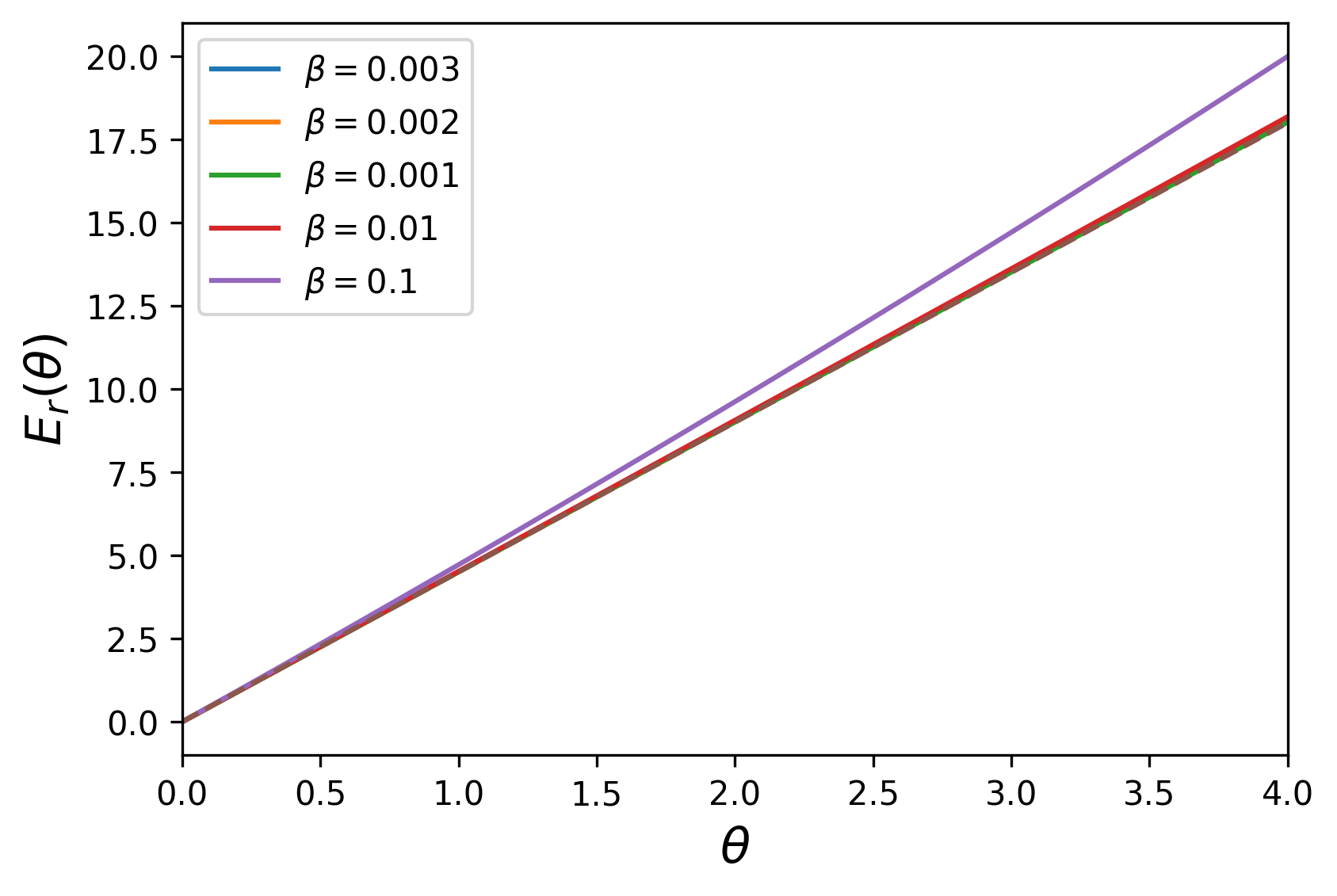}
\caption{The neutral drift limit (dashed brown line) for the reward cost function $E_{r}(\theta)$ with $N = 3$, $t = 1$, and $\theta = 1$ for DG with $B = 2, c = 1$. We notice that, the more the $\beta$ value approaches 0, the closer to the limiting value the cost function $E_{r}(\theta)$ gets, in accordance to the analytical result Proposition \ref{prop: neutral drift t=1 (1/2, 1/2)}.} 
\label{fig:neutraldriftDG1/2}
\end{figure}

\begin{proposition}(Strong selection limit) It holds that
\label{prop: infinite selection t=1 (1/2, 1/2)}    
$$
\lim\limits_{\beta\rightarrow \infty}E_r(\theta)=\begin{cases}
+\infty \quad\text{if}\quad \delta+\theta> 0,\\
\frac{2 N^2\theta}{N-1}\quad\text{if}\quad \delta+\theta= 0,\\
\frac{N^2\theta}{N-1} \quad\text{if}\quad \delta+\theta< 0.
\end{cases}
$$
\end{proposition}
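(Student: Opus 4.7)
The plan is to start from the closed form for $E_r(\theta)$ established in Lemma~\ref{lem: first formula of the cost function}, namely
\[
E_r(\theta) = \frac{N^2\theta}{2(N-1)}\Bigl(1+\frac{d^{N-2}}{y_2}\Bigr)\frac{1}{1 - ad\frac{y_3}{y_2}},
\]
and to evaluate the $\beta \to \infty$ behaviour of each constituent factor separately. Crucially, in both the DG and PGG settings we have $\delta < 0$, so the analysis uses this sign throughout.

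First I would treat the quantities that do not involve $\theta$. Since $\delta<0$, we have $e^{\beta\delta}\to 0$ and $e^{-\beta\delta}\to\infty$ as $\beta\to\infty$, hence $d = (1+e^{\beta\delta})^{-1}\to 1$, $b = (1+e^{-\beta\delta})^{-1}\to 0$, and consequently $bd\to 0$. Plugging this into the explicit formula from Lemma~\ref{lem: recurrence relation}(1),
\[
y_2 = \hat{y}_{N-2} = \sum_{k=0}^{\lfloor (N-2)/2\rfloor}(-1)^k\binom{N-2-k}{k}(bd)^k,\qquad y_3 = \hat{y}_{N-3},
\]
every term with $k\geq 1$ vanishes in the limit, leaving $y_2\to 1$ and $y_3\to 1$. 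Together with $d^{N-2}\to 1$, this already yields $1 + d^{N-2}/y_2 \to 2$.

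Next I would do a case analysis on the sign of $\delta+\theta$, which controls the behaviour of $a = (1+e^{-\beta(\delta+\theta)})^{-1}$. In the subcritical case $\delta+\theta<0$, $a\to 0$, so $ad(y_3/y_2)\to 0$ and the last factor tends to $1$, giving $E_r(\theta)\to \frac{N^2\theta}{2(N-1)}\cdot 2\cdot 1 = \frac{N^2\theta}{N-1}$. In the critical case $\delta+\theta = 0$, $a = 1/2$ for every $\beta$, so $ad(y_3/y_2)\to 1/2$ and the last factor tends to $2$, producing $\frac{N^2\theta}{2(N-1)}\cdot 2\cdot 2 = \frac{2N^2\theta}{N-1}$.

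The main obstacle is the supercritical case $\delta+\theta>0$, where $a\to 1$ and therefore $ad(y_3/y_2)\to 1$, making the denominator of the last factor degenerate. To conclude that the limit is $+\infty$ (rather than merely indeterminate), I would invoke the sign information $y_2 - ady_3 > 0$ that was already established inside the proof of Theorem~\ref{thm: derivative positive t1 (1/2, 1/2)} (as a consequence of the positivity of the $\hat{y}_j$ and of the cost function itself). This guarantees that the denominator $1 - ad\frac{y_3}{y_2}$ approaches $0$ strictly from above, so the last factor diverges to $+\infty$ while the other factors remain bounded and positive. Combining the three cases yields the stated piecewise limit.
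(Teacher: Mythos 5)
Your proposal is correct and follows essentially the same route as the paper's proof: compute the $\beta\to\infty$ limits of $d^{N-2}$, $bd$ (hence $y_2,y_3\to 1$), and $ad$ (with the three-way case split on the sign of $\delta+\theta$), then combine them in the closed form from Lemma~\ref{lem: first formula of the cost function}. Your additional observation that $y_2-ady_3>0$ forces the denominator to vanish from above in the supercritical case is a small but welcome refinement of a step the paper leaves implicit.
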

\begin{proof}
We proceed as in the proof of the previous proposition by computing the limit of relevant quantities as $\beta\rightarrow \infty$ instead of $\beta\rightarrow 0$, noting that $\delta<0$ in our setting. We have
\begin{align*}
&\lim\limits_{\beta\rightarrow \infty} d^{N-2}=\lim\limits_{\beta\rightarrow \infty}\left(\frac{1}{1+e^{\beta \delta}}\right)^{N-2}=1,\\
  &\lim\limits_{\beta\rightarrow \infty} (bd)= \lim\limits_{\beta\rightarrow \infty}\frac{1}{2+e^{\beta \delta}+e^{-\beta \delta}}= 0,\\
&\lim\limits_{\beta\rightarrow \infty} (ad)=\lim\limits_{\beta\rightarrow \infty} \frac{1}{(1+e^{\beta \delta})(1+e^{-\beta(\theta+\delta)})}=\begin{cases}
1 \quad\text{if}\quad \delta+\theta> 0,\\
\frac{1}{2}\quad\text{if}\quad \delta+\theta= 0,\\
0 \quad\text{if}\quad \delta+\theta< 0.
\end{cases}
\end{align*}
It also follows that
\[
\lim\limits_{\beta\rightarrow \infty} y_2= 1, \quad \lim\limits_{\beta\rightarrow \infty} y_3=1.
\]
Putting everything together yields 
\begin{align*}
\lim _{\beta \rightarrow 0} E_{r}(\theta)&=\lim _{\beta \rightarrow 0} \Bigg(\frac{N^2\theta}{2N-2}(1+\frac{d^{N-2}}{y_2})\Big(\frac{1}{1-ad\frac{y_3}{y_2}}\Big)\Bigg)
\\&=\begin{cases}
+\infty \quad\text{if}\quad \delta+\theta> 0,\\
\frac{2 N^2\theta}{N-1}\quad\text{if}\quad \delta+\theta= 0,\\
\frac{N^2\theta}{N-1} \quad\text{if}\quad \delta+\theta< 0.
\end{cases}
\end{align*}

\end{proof}

\begin{figure}[H]
\centering
\hspace*{-2cm}  
\includegraphics[width=1.2\textwidth]{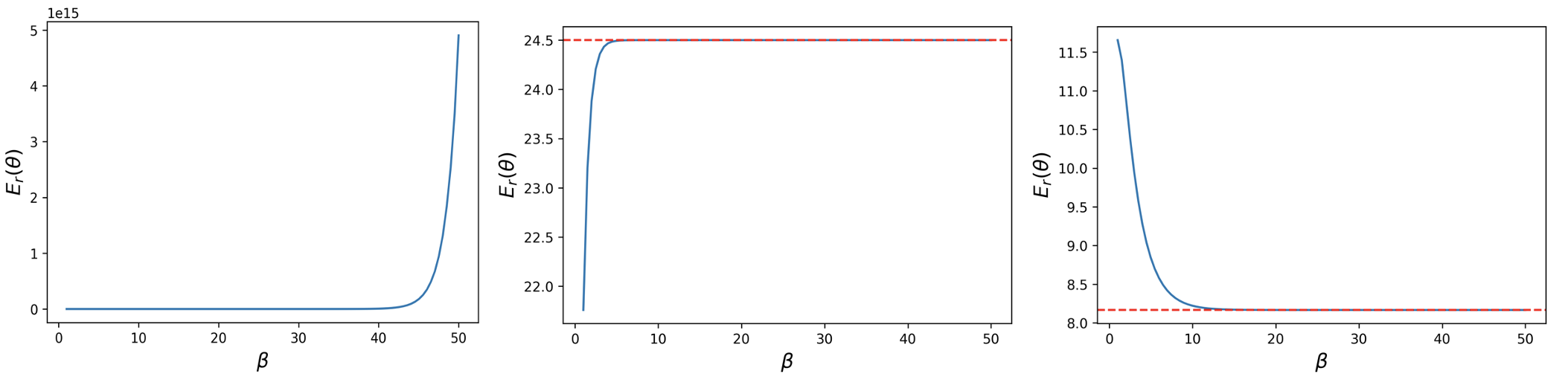}
\caption{The strong selection limit (dashed red line) for the reward cost function $E_{r}(\theta)$ with $N = 7$, $t = 1$ for DG with $B = 2, c = 1$. The first image corresponds to $\delta + \theta >0$ ($\delta = -0.5$, $\theta = 1$), the middle one to $\delta + \theta = 0$ ($\delta = - 1$, $\theta = 1$), and the last one to $\delta + \theta < 0$ ($\delta$ = -1.5, $\theta = 1$). The numerical results are in accordance with the analytical ones in Proposition \ref{prop: infinite selection t=1 (1/2, 1/2)}.} 
\label{fig:strongselectionDG1/2}
\end{figure}

\subsection{Institutional reward with $t = 1$ for the all-defector starting state}
\label{sec: Reward t = 1 (1, 0)}

In this section, we introduce the analytical results related to the case of institutional reward with $t = 1$, when the institution provides reward only when there is a \textit{single} cooperator in the population. Differently from Section \ref{sec: Reward t = 1 (1/2, 1/2)}, here we assume that the population starts in the state $S_0$ of no cooperators (all defectors). We present the cost function for this particular case together with information on its monotonicity, as well as the limits for the neutral drift and strong selection.

The reward cost function for the threshold value $t=1$ is
\begin{equation}
\label{eq: investment reward (1,0)}
E_r(\theta)= N^2\theta \frac{(W^{-1})_{1,1}}{N-1},
\end{equation} which is obtained by substituting $t = 1$ in \eqref{eq: total investment reward (1,0)}. The computation of the entry $(W^{-1})_{1,1}$ is identical to the one in Section \ref{sec: Reward t = 1 (1/2, 1/2)}, that is $(W^{-1})_{1,1} = \Big({1-ad\frac{y_3}{y_2}}\Big)$. 
Thus, 
$$
E_r(\theta) = \frac{N^2\theta}{N-1} \Big(\frac{1}{1-ad\frac{y_3}{y_2}}\Big).
$$
The following theorem, which is the counterpart of Theorem \ref{thm: derivative positive t1 (1/2, 1/2)} in Section \ref{sec: Reward t = 1 (1/2, 1/2)}, provides an explicit formula for the reward cost function $E_r(\theta)$ and shows that it is always non-decreasing for all parameter values. 
\begin{theorem}[Derivative and monotonicity of the cost function]\
\label{thm: derivative positive t1 (1, 0)}

\noindent $E_r'(\theta)$ is always increasing with respect to $\theta$ for all values of $N$, $\theta$ and $\beta$, where
\[
E_r'(\theta)=\frac{N^2}{N-1}\Big(\frac{y_2}{y_2-ady_3}\Big)\Big[1+\frac{\theta dy_3\beta e^{-\beta(\delta+\theta) }a^2}{y_2-ad y_3}\Big].
\]
As a consequence, the minimisation problem \eqref{eq: min prob} has a unique solution
\[
\min\limits_{\theta\geq \theta_0} E_r(\theta)= E_r(\theta_0).
\]
\end{theorem}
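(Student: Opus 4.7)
The plan is to mirror the argument used for Theorem~\ref{thm: derivative positive t1 (1/2, 1/2)}, since the only change is that the cost function now has the form
\[
E_r(\theta)= \frac{N^2\theta}{N-1}\cdot\frac{y_2}{y_2-ady_3},
\]
so the pre-factor $\frac12(y_2+d^{N-2})$ from the equally-likely case is replaced by $y_2$. First I would record the parameter dependence: among $a,b,c,d$ only $a=(1+e^{-\beta(\delta+\theta)})^{-1}$ depends on $\theta$, and a direct computation gives $a'(\theta)=\beta e^{-\beta(\delta+\theta)}a^2$. In particular $b$, $d$, and hence $y_2,y_3$ (defined by the backward recurrence~\eqref{eq: recurrence} and depending only on $bd$) are independent of $\theta$.

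Next I would set $C(\theta)=\dfrac{1}{1-ad\,y_3/y_2}=\dfrac{y_2}{y_2-ad y_3}$ and differentiate, exactly as in the proof of Theorem~\ref{thm: derivative positive t1 (1/2, 1/2)}, to obtain
\[
C'(\theta)=\frac{d y_3\,\beta e^{-\beta(\delta+\theta)}a^2}{y_2-ady_3}\,C(\theta).
\]
Writing $E_r(\theta)=\frac{N^2\theta}{N-1}\,C(\theta)$ and applying the product rule yields
\[
E_r'(\theta)=\frac{N^2}{N-1}\bigl(C(\theta)+\theta C'(\theta)\bigr)=\frac{N^2}{N-1}\Bigl(\tfrac{y_2}{y_2-ad y_3}\Bigr)\Bigl[1+\tfrac{\theta d y_3\beta e^{-\beta(\delta+\theta)}a^2}{y_2-ad y_3}\Bigr],
\]
which is the formula stated.

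It then remains to argue positivity. The quantity $y_2-ad y_3$ is positive because $E_r$ is a well-defined, finite cost, and $a,d,\beta,\theta,e^{-\beta(\delta+\theta)}>0$ by construction. The positivity of $y_2$ and $y_3$ was already established inside the proof of Theorem~\ref{thm: derivative positive t1 (1/2, 1/2)} using the closed form $\hat y_j=\sum_{k=0}^{j-2}x_1^k x_2^{j-1-k}(1-x_1)+x_1^{j-1}$ from Lemma~\ref{lem: recurrence relation} with $x_1,x_2\in(0,1)$; I would simply invoke that argument rather than repeat it. Combining these observations, every factor in $E_r'(\theta)$ is strictly positive, so $E_r'(\theta)>0$ for all $\theta>0$. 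Monotonicity implies that $E_r$ attains its minimum on $[\theta_0,\infty)$ at the left endpoint, giving the unique minimiser $\theta=\theta_0$ of~\eqref{eq: min prob}.

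There is no substantive new difficulty compared with the equally-likely initial state case; the only mild subtlety to check is that removing the $\frac12(y_2+d^{N-2})$ factor does not destroy any sign or well-definedness property, which is immediate since $y_2>0$ alone suffices. Hence I expect the proof to be essentially a one-page adaptation of the earlier argument, and the main conceptual point—positivity of the sequence $\hat y_j$—has already been handled.
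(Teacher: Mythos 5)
Your proposal is correct and follows essentially the same route as the paper's own proof: the paper likewise sets $C(\theta)=\frac{y_2}{y_2-ad\,y_3}$, differentiates via the product rule using $a'(\theta)=\beta e^{-\beta(\delta+\theta)}a^2$, and then simply refers back to the positivity argument ($y_2-ad\,y_3>0$ and $\hat y_j>0$) from Theorem~\ref{thm: derivative positive t1 (1/2, 1/2)}. No substantive differences.
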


\begin{proof}
We proceed like in Theorem \ref{thm: derivative positive t1 (1/2, 1/2)}, recalling from Lemma \ref{lem: first formula of the cost function} that
\[
E_r(\theta) = \frac{N^2\theta}{N-1} \Big(\frac{1}{1-ad\frac{y_3}{y_2}}\Big).
\]    
\noindent Noting, as before, that only $a$ depends on $\theta$, while $b$ and $d$ (and thus $y_2$ and $y_3$) do not, we let $C=C(\theta)=\frac{y_2}{y_2-ady_3}$ and compute its derivative 
$$
C'(\theta) = \frac{y_{3} \beta  e^{- \beta \left(\delta + \theta\right)}}{\left(1 + e^{- \beta \left(\delta + \theta\right)}\right)^{2} \left(y_{2} - \frac{y_{3}}{\left(1 + e^{- \beta \left(\delta + \theta\right)}\right) \left(1+ e^{\beta \delta}\right)}\right)^{2} \left(1 + e^{\beta \delta}\right)}.
$$
We then calculate $E'_r(\theta)$ via the product rule:
\begin{align*}
    E_r'(\theta)&=\frac{N^2}{N-1}\Big[C(\theta)+\theta C'(\theta)\Big]\\
    &=\frac{N^2}{N-1}C(\theta)\Big[1+\frac{\theta dy_3\beta e^{-\beta(\delta+\theta) }a^2}{y_2-ad y_3}\Big]
    \\&= \frac{N^2}{N-1}\Big(\frac{y_2}{y_2-ady_3}\Big)\Big[1+\frac{\theta dy_3\beta e^{-\beta(\delta+\theta) }a^2}{y_2-ad y_3}\Big].
\end{align*}

\noindent Finally, the proof of $E_r'(\theta)>0$ for all $\theta>0$ follows closely that of Theorem \ref{thm: derivative positive t1 (1/2, 1/2)}.

\end{proof}

\subsection{Asymptotic limits}

\noindent We now study the neutral drift and strong selection limits of the reward cost function $E_r(\theta)$ with $t=1$ for the all-defector initial state when the intensity of selection $\beta$ tends to $0$ and to +$\infty$, respectively.
\begin{proposition}(Neutral drift limit) It holds that
\label{prop: neutral drift t=1 (1, 0)}    
$$
\lim\limits_{\beta\rightarrow 0}E_r(\theta)=\frac{N^{2} \theta}{N-1} \Big(\frac{4P(N)}{4P(N)-Q(N)}\Big),
$$
where
\begin{align*}
P(N)&= 1+(-1)^1{N-2-1 \choose 1}\Big(\frac{1}{4}\Big)^1+(-1)^2{N-2-2 \choose 2}\Big(\frac{1}{4}\Big)^{2}+\ldots +(-1)^{\left\lfloor\frac{N-2}{2}\right\rfloor}{N-2-\lfloor\frac{N-2}{2}\rfloor \choose \lfloor\frac{N-2}{2}\rfloor}\Big(\frac{1}{4}\Big)^{\lfloor\frac{N-2}{2}\rfloor},\\
Q(N)& = 1+(-1)^{1}{N-3-1 \choose 1}\Big(\frac{1}{4}\Big)^1+(-1)^{2}{N-3-2 \choose 2}\Big(\frac{1}{4}\Big)^{2}+\ldots + (-1)^{\left\lfloor\frac{N-3}{2}\right\rfloor}{N-3-\lfloor\frac{N-3}{2}\rfloor \choose \lfloor\frac{N-3}{2}\rfloor}\Big(\frac{1}{4}\Big)^{\lfloor\frac{N-3}{2}\rfloor}.
\end{align*}
\end{proposition}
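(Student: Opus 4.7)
The plan is to mirror the proof of Proposition \ref{prop: neutral drift t=1 (1/2, 1/2)}, but starting from the simpler expression
\[
E_r(\theta) = \frac{N^2\theta}{N-1}\cdot\frac{1}{1-ad\,\frac{y_3}{y_2}}
\]
derived in Section \ref{sec: Reward t = 1 (1, 0)}, which does not contain the $d^{N-2}/y_2$ term that appeared in the equally-likely initial state case. Since the dependence on $\beta$ enters only through $a$, $b$, $d$ (and through $y_2, y_3$ via the product $bd$), the strategy is to take the $\beta\to 0$ limit inside the expression, evaluate each factor separately, and then combine.

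First I would record the three elementary limits that are also used in Proposition \ref{prop: neutral drift t=1 (1/2, 1/2)}: as $\beta \to 0$,
\[
ad \;=\; \frac{1}{(1+e^{\beta\delta})(1+e^{-\beta(\theta+\delta)})} \;\longrightarrow\; \frac{1}{4}, \qquad bd \;=\; \frac{1}{2+e^{\beta\delta}+e^{-\beta\delta}} \;\longrightarrow\; \frac{1}{4}.
\]
Next, using the explicit polynomial representation from statement (1) of Lemma \ref{lem: recurrence relation},
\[
y_2 = \hat{y}_{N-2} = \sum_{k=0}^{\lfloor (N-2)/2\rfloor} (-1)^k \binom{N-2-k}{k} (bd)^k, \qquad y_3 = \hat{y}_{N-3} = \sum_{k=0}^{\lfloor (N-3)/2\rfloor} (-1)^k \binom{N-3-k}{k} (bd)^k,
\]
and substituting $bd\to 1/4$ termwise (which is justified because each sum is a finite polynomial in $bd$), I obtain $y_2 \to P(N)$ and $y_3 \to Q(N)$, where $P(N)$ and $Q(N)$ are exactly the two finite sums stated in the proposition.

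Finally, combining these limits and using continuity of the rational function $x \mapsto 1/(1-x)$ at $x = \tfrac{1}{4}\cdot Q(N)/P(N)$, I can pass the limit through:
\[
\lim_{\beta\to 0} E_r(\theta) \;=\; \frac{N^2\theta}{N-1}\cdot\frac{1}{1 - \tfrac{1}{4}\cdot\tfrac{Q(N)}{P(N)}} \;=\; \frac{N^{2}\theta}{N-1}\cdot\frac{4P(N)}{4P(N)-Q(N)},
\]
as desired. The only subtlety, and the step that deserves explicit justification, is that the denominator $4P(N)-Q(N)$ is nonzero, so that the limit is finite and the final algebraic simplification is valid. This is equivalent to the statement $y_2 - ad\,y_3 > 0$ (which is positive for all $\beta>0$ as established in Theorem \ref{thm: derivative positive t1 (1/2, 1/2)} via the representation $\hat{y}_j = \sum_{k=0}^{j-2} x_1^k x_2^{j-1-k}(1-x_1) + x_1^{j-1}$), evaluated in the limit $\beta \to 0$, where $x_1, x_2 \to 1/2$ so both $P(N)$ and $Q(N)$ are strictly positive and the product $\tfrac{1}{4}Q(N)/P(N) < 1$. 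This confirms the limit is finite and equal to the claimed value. The whole argument is essentially a line-by-line trimming of the proof of Proposition \ref{prop: neutral drift t=1 (1/2, 1/2)}, with the $d^{N-2}/y_2$ correction removed.
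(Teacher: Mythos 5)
Your proposal is correct and follows essentially the same route as the paper, which likewise starts from $E_r(\theta)=\frac{N^2\theta}{N-1}\bigl(1-ad\,\tfrac{y_3}{y_2}\bigr)^{-1}$ and passes the limits $ad\to\tfrac14$, $bd\to\tfrac14$, $y_2\to P(N)$, $y_3\to Q(N)$ through the expression exactly as in Proposition \ref{prop: neutral drift t=1 (1/2, 1/2)}. Your added check that $4P(N)-Q(N)>0$ (via the positivity of $y_2-ad\,y_3$ in the limit $x_1,x_2\to\tfrac12$) is a small extra justification the paper leaves implicit, but it does not change the argument.
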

\begin{proof}
Recall that $E_r(\theta) = \frac{N^2\theta}{N-1} \Big(\frac{1}{1-ad\frac{y_3}{y_2}}\Big)$. Following similar lines as in the proof of Proposition \ref{prop: neutral drift t=1 (1/2, 1/2)}, we obtain 
\begin{align*}
\lim _{\beta \rightarrow 0} E_{r}(\theta)&=\lim _{\beta \rightarrow 0} \Bigg(\frac{N^2\theta}{N-1}\Bigg(\frac{1}{1-ad\frac{y_3}{y_2}}\Bigg)\Bigg)
\\&=\frac{N^{2} \theta}{N-1} \left(\frac{1 }{1-\frac{1}{4}\frac{Q(N)}{P(N)}}\right)\\
&= \frac{N^{2} \theta}{N-1} \Big(\frac{4P(N)}{4P(N)-Q(N)}\Big).
\end{align*}

\end{proof}

\begin{figure}[H]
\centering
\hspace*{-2cm}  
\includegraphics[width=0.7\textwidth]{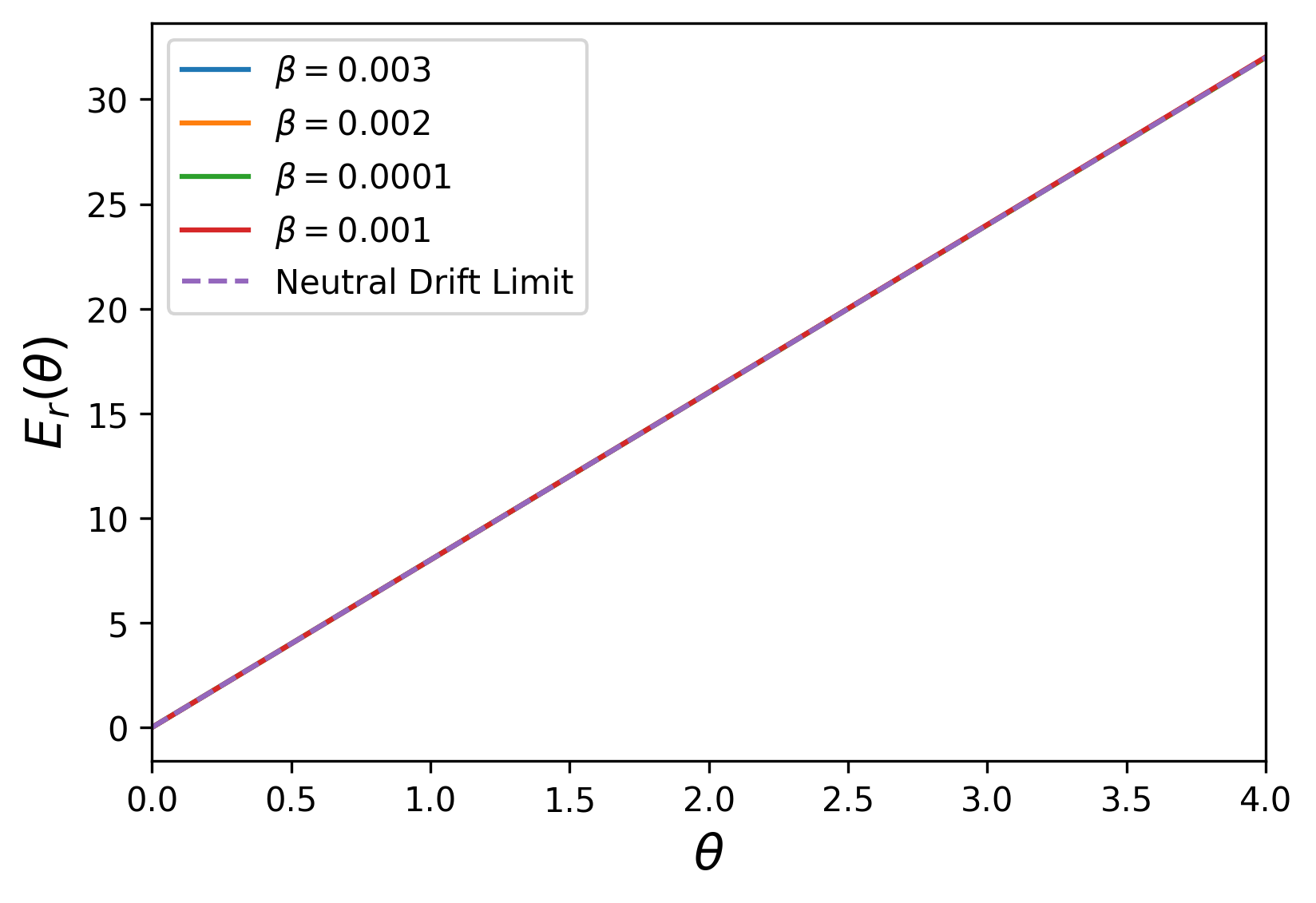}
\caption{The neutral drift limit (dashed purple line) for the reward cost function $E_{r}(\theta)$ with $N = 3$, $t = 1$, and $\theta = 1$ for DG with $B = 2, c = 1$, under the assumption that the population starts with all defectors. We notice that, the closer the $\beta$ value approaches 0, the closer to the limiting value the cost function $E_{r}(\theta)$ gets, in accordance to the analytical result Proposition \ref{prop: neutral drift t=1 (1, 0)}.} 
\label{fig:neutraldriftDG10}
\end{figure}

\begin{proposition}(Strong selection limit) It holds that
\label{prop: infinite selection t=1 (1, 0)}    
$$
\lim\limits_{\beta\rightarrow \infty}E_r(\theta)=\begin{cases}
+\infty \quad\text{if}\quad \delta+\theta> 0,\\
\frac{2 N^2\theta}{N-1}\quad\text{if}\quad \delta+\theta= 0,\\
\frac{N^2\theta}{N-1} \quad\text{if}\quad \delta+\theta< 0.
\end{cases}
$$
\end{proposition}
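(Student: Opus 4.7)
The plan is to mimic the proof of Proposition \ref{prop: infinite selection t=1 (1/2, 1/2)}, but starting from the simpler closed form that applies in the all-defector initial state, namely
\[
E_r(\theta) = \frac{N^2 \theta}{N-1} \cdot \frac{1}{1 - ad\,\frac{y_3}{y_2}},
\]
which is exactly the factor that already appeared in the $(1/2,1/2)$ case. The $d^{N-2}/y_2$ correction that contributed the extra term in the earlier proof is absent here, which will explain why the $\delta+\theta = 0$ value is $2N^2\theta/(N-1)$ instead of something involving the additional summand.

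First I would read off the pointwise limits of the building blocks as $\beta\to\infty$, using $\delta<0$. The key ones are $\lim_{\beta\to\infty} d = 1$, $\lim_{\beta\to\infty} b = 0$, hence $\lim_{\beta\to\infty} bd = 0$; therefore in the explicit polynomial expression for $\hat y_j$ from Lemma \ref{lem: recurrence relation}(1) only the $k=0$ term survives, giving $\lim_{\beta\to\infty} y_2 = \lim_{\beta\to\infty} y_3 = 1$. The decisive quantity is $ad$, and a direct case split on the sign of $\delta+\theta$ yields
\[
\lim_{\beta\to\infty} ad \;=\; \lim_{\beta\to\infty} \frac{1}{(1+e^{\beta\delta})(1+e^{-\beta(\delta+\theta)})} \;=\; \begin{cases} 1 & \text{if } \delta+\theta > 0, \\ 1/2 & \text{if } \delta+\theta = 0, \\ 0 & \text{if } \delta+\theta < 0. \end{cases}
\]

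Substituting these into the closed form and using continuity of $x\mapsto 1/(1-x)$ away from $x=1$, I get $1 - ad\,y_3/y_2 \to 0^+$ when $\delta+\theta>0$, $\to 1/2$ when $\delta+\theta=0$, and $\to 1$ when $\delta+\theta<0$. Multiplying by the prefactor $N^2\theta/(N-1)$ produces the three claimed values. The only subtlety is confirming that the approach to $0$ in the first case is from the positive side (so the limit is $+\infty$ rather than undefined), which follows since $ad < 1$ and $y_3/y_2 \to 1$ monotonically, together with the strict positivity of $y_2 - ad\,y_3$ that was already established in the proof of Theorem \ref{thm: derivative positive t1 (1, 0)}; this is really the only place where one must be slightly careful, but it is immediate from the existing estimates and not a genuine obstacle.
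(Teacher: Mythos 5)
Your proposal is correct and takes essentially the same route as the paper: the paper's own proof simply invokes the limit computations from Proposition \ref{prop: infinite selection t=1 (1/2, 1/2)} (namely $bd\to 0$, hence $y_2,y_3\to 1$, together with the three-way case split on $\lim_{\beta\to\infty} ad$) and substitutes them into $E_r(\theta)=\frac{N^2\theta}{N-1}\bigl(1-ad\,\tfrac{y_3}{y_2}\bigr)^{-1}$, exactly as you do. Your additional observation that in the case $\delta+\theta>0$ the denominator tends to $0$ from above, justified by the positivity of $y_2-ad\,y_3$ established in the monotonicity theorem, is a detail the paper leaves implicit.
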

\begin{proof}
Recall that $E_r(\theta) = \frac{N^2\theta}{N-1} \Big(\frac{1}{1-ad\frac{y_3}{y_2}}\Big)$. Following similar lines as in the proof of Proposition \ref{prop: infinite selection t=1 (1/2, 1/2)}, we get 
\begin{align*}
\lim _{\beta \rightarrow \infty} E_{r}(\theta)&=\lim _{\beta \rightarrow \infty} \Bigg(\frac{N^2\theta}{N-1}\Big(\frac{1}{1-ad\frac{y_3}{y_2}}\Big)\Bigg)
\\&=\begin{cases}
+\infty \quad\text{if}\quad \delta+\theta> 0,\\
\frac{2 N^2\theta}{N-1}\quad\text{if}\quad \delta+\theta= 0,\\
\frac{N^2\theta}{N-1} \quad\text{if}\quad \delta+\theta< 0.
\end{cases}
\end{align*}
\end{proof}

\begin{figure}[H]
\centering
\hspace*{-2cm}  
\includegraphics[width=1.2\textwidth]{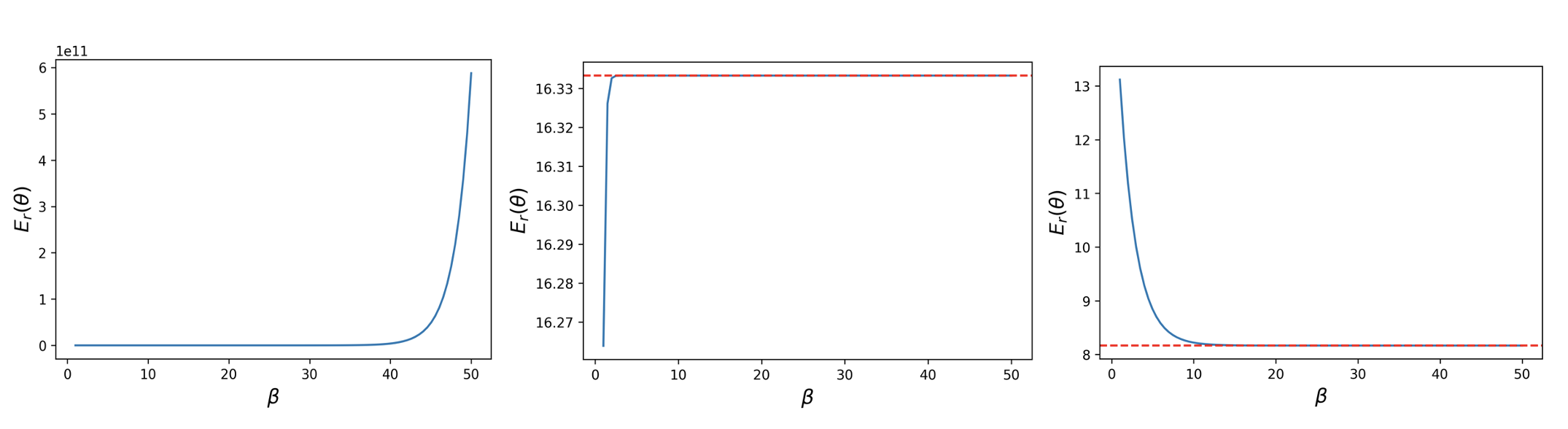}
\caption{The strong selection limit (dashed red line) for the reward cost function $E_{r}(\theta)$ with $N = 7$, $t = 1$ for DG with $B = 2, c = 1$, under the assumption that the population starts with all defectors. The first image corresponds to $\delta + \theta >0$ ($\delta = -0.5$, $\theta = 1$), the middle one to $\delta + \theta = 0$ ($\delta = - 1$, $\theta = 1$), and the last one to $\delta + \theta < 0$ ($\delta$ = -1.5, $\theta = 1$). The numerical results are in accordance with the analytical one in Proposition \ref{prop: infinite selection t=1 (1, 0)}.} 
\label{fig:strongselectionDG10}
\end{figure}
\color{black}

\subsection{Institutional reward with $t = 2$}
\label{sec: Reward t = 2}

We compute the reward cost function for $t = 2$, i.e., when the institution provide rewards only if there are at most \textit{two} cooperators in the population, under both starting states.

\noindent For $t=2$, \\

$W=\begin{pmatrix}
1&-a&&&&&\\
-c&1&-a&&&&&\\
&-d&1&-b&&&\\
&&\ddots&\ddots&\ddots&\\
&&&-d&1&-b&&&\\
&&&&-d&1&-b&&&\\
&&&&&\ddots&\ddots&\ddots&\\
&&&&&&-d&1&-b\\
&&&&&&&-d&1
\end{pmatrix}.$
\\

Therefore, for the equally likely starting state, by using Equation \eqref{eq: total investment reward (1/2,1/2)}, we get
\begin{align}
\label{eq: Reward t = 2 (1/2,1/2)}
E_r(\theta)&=\frac{N^{2} \theta}{2} \sum_{j=1}^{2} \frac{\left(W^{-1}\right)_{1, j}+\left(W^{-1}\right)_{N-1, j}}{N-j} \nonumber
\\&=\frac{N^{2} \theta}{2}\left[\left(\frac{\left(W^{-1}\right)_ {1,1}+\left(W^{-1}\right)_{N-1,1}}{N-1}\right) +\left(\frac{\left(W^{-1}\right)_{1,2} +\left(W^{-1}\right)_{N-1,2}}{N-2}\right)\right]. 
\end{align}
To obtain a simplified version of Equation \eqref{eq: Reward t = 2 (1/2,1/2)}, we need to compute $(W^{-1})_{1,1}$, $(W^{-1})_{N-1,1}$, $(W^{-1})_{1,2}$, $(W^{-1})_{N-1,2}$. We apply Theorem \ref{thm: diag element}, the diagonal element case, and Corollary \ref{cor: fi product}, both cases, to get: 
\begin{align*}
(W^{-1}_{1,1})&= \frac{1}{1-ac\frac{y_3}{y_2}},\\
(W^{-1})_{N-1,1}&= c d^{N-3} \frac{1}{y_{2}} \frac{1}{1-a c \frac{y_{3}}{y_{2}}},\\
(W^{-1})_{1,2}&= a \frac{1}{1-a c-a d\frac{y_4}{y_3}},\\
(W^{-1})_{N-1,2}&= d^{N-3}\frac{1}{y_{3}} \frac{1}{1-a c-a d \frac{y_4}{y_3}}.
\end{align*}

See Section \ref{sec: Reward t = 2 Ws} for detailed computations of the $W$ entries.

Substituting the $W$ values in Equation \eqref{eq: Reward t = 2 (1/2,1/2)} yields:

\begin{align*}
E_r(\theta)&=\frac{N^{2} \theta}{2}\left(\frac{\frac{1}{1-a c \frac{y_{3}}{y_{2}}}+c d^{N-3} \frac{1}{y_{2}} \frac{1}{1-a c \frac{y_3}{y_2}}}{N-1}+\frac{a\frac{1}{1-ac-ad\frac{y_4}{y_3}}+d^{N-3}\frac{1}{y_3}\frac{1}{1-ac-ad\frac{y_4}{y_3}}}{N-2}\right)
\\& =\frac{N^{2} \theta}{2}\left[\frac{\frac{1}{1-a c \frac{y_{3}}{y_{2}}}\left(1+c d^{N-3} \cdot \frac{1}{y_{2}}\right)}{N-1} + \frac{\frac{1}{1-a c-a d \frac{y_{4}}{y_{3}}}\left(a+d^{N-3} \cdot \frac{1}{y_{3}}\right)}{N-2}\right].
\end{align*}

\noindent For the all-defector starting state, we have:
\begin{align}
\label{eq: Reward t = 2 (1,0)}
E_r(\theta)&=\sum_{j=1}^{2} \frac{\left(W^{-1}\right)_{1, j}}{N-j} =\frac{W^{-1}_{1,1}}{N-1}+\frac{W^{-1}_{1,2}}{N-2}. 
\end{align}

\noindent By substituting the $W$ values in Equation \eqref{eq: Reward t = 2 (1,0)}, we get:

$$
E_r(\theta)=\frac{\frac{1}{1-a c \frac{y_{3}}{y_{2}}}}{N-1}+\frac{a\frac{1}{1-ac-ad\frac{y_4}{y_3}}}{N-2}.
$$

\section{Institutional incentives under general starting state and mutation}
\label{sec: general mutation}

In this section, we present the cost function $E(\theta)$ for the case of  general mutation rates. Indeed, for an arbitrary mutation rate $\mu$, the transition probabilities change as follows.
\begin{equation*} 
\label{eq: transition probabilities reward general mutation}
\begin{split} 
u_{i,i\pm k} &= 0 \qquad \text{ for all } k \geq 2, \\
u_{i,i+1} &= \frac{N-i}{N}\left(\mu + (1-\mu)\frac{i}{N} \left(1 + e^{-\beta[\Pi_C(i) - \Pi_D(i)+\theta_i/i]}\right)^{-1}\right),\\
u_{i,i-1} &= \frac{i}{N}\left(\mu + (1-\mu)\frac{N-i}{N} \left(1 + e^{\beta[\Pi_C(i) - \Pi_D(i)+\theta_i/i]}\right)^{-1}\right),\\
u_{i,i} &= 1 - u_{i,i+1} -u_{i,i-1}.
\end{split} 
\end{equation*}
Thus, the expected cost function $E(\theta)$ given the initial state of the population being $S_i$ (i.e. with $i$ cooperators and $N-i$ defectors) is
\begin{align}
\label{eq: total investment reward general mutation}
E^i(\theta) = \sum_{j=0}^{N} n_{ij}  \theta_j, 
\end{align}
where $n_{i;j}$ is obtained from $\mathcal{N}=(n_{ij})_{i,j=0}^{N}= (I-U)^{-1}$ and
\begin{equation}
\label{eq: incentives per generation general mutation}
\theta_j = \begin{cases} \frac{j}{a}\theta,\quad \text{reward incentive},\\
\frac{N-j}{b}\theta,\quad \text{punishment incentive},\\
\min\Big(\frac{j}{a}, \frac{N-j}{b}\Big)\theta,\quad\text{mixed incentive}.
\end{cases}
\end{equation}
\noindent Depending on the incentive $\theta_j$ selected, $E^i(\theta)$ could become  either $E^i_r(\theta)$, $E^i_p(\theta)$, or $E^i_{mix}(\theta)$. In Section \ref{sec: behaviour general mutation}, we perform numerical simulations on the reward cost function $E^i_r(\theta)$ for various starting points.
\color{black}

\section{Numerical Analysis}
\label{sec: numerical analysis}

In this section, we present the results of our numerical analysis, coded in Python 3.10. We employ a logarithmic scale to respond to skewness of larger values.

\subsection{Behaviour of the cost functions under small mutation with different initial states}

In Figures \ref{fig:varioustDGold} and \ref{fig:varioustDGnew}, we plot the reward cost function for DG for the two initial state assumptions: the dynamics randomly commencing in the state $S_0$ or the state $S_N$ and it starting in the state $S_0$.  Figures \ref{fig:varioustPGGold} and 
\ref{fig:varioustPGGnew} in  Appendix, show the behaviour of the reward cost function for PGG, while Figures \ref{fig:varioustDGpunishmentold} and 
\ref{fig:varioustDGpunishmentnew} show that of the punishment cost function for DG. In Figures \ref{fig:varioustDGrewardandpunishmentold} and  \ref{fig:varioustDGrewardandpunishmentnew}, we plot the behaviour of the hybrid cost function for DG. The numerical simulations concern some small values of $N$ and various values of the other parameters. From these plots, we observe that the cost functions $E_r(\theta)$, $E_p(\theta)$ and $E_{mix}(\theta)$ are increasing for all $1 \leq t < N-1$ and become non-monotonic for $t = N-1$, exhibiting a phase transition. This behaviour is dependant on changes in the strength of selection, $\beta$. The larger this parameter is, the more pronounced the non-monotonic behaviour of the function becomes. The aforementioned behaviour is robust to changes in the game-specific values such as $B$, $c$, $r$, $n$.


\begin{figure}[H]
\centering
\hspace*{-1cm}  
\includegraphics[width=0.8\textwidth]{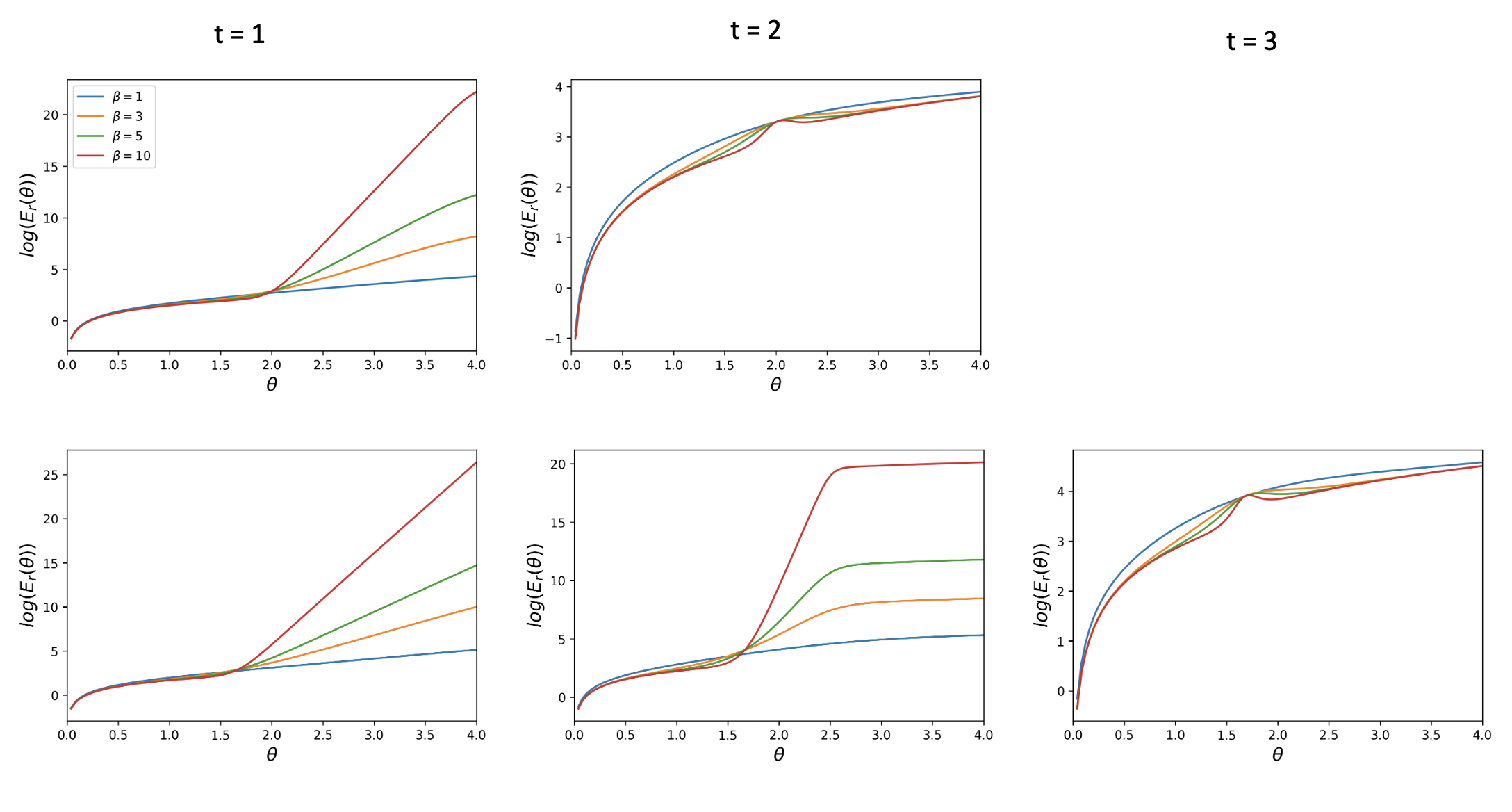}
\caption{Behaviour of the reward cost function $E_{r}(\theta)$, for different thresholds $t$ and strengths of selection $\beta$, for DG with $B = 2, c = 1$. The first row corresponds to $N = 3$ and the second one to $N = 4$. The leftmost column corresponds to $t = 1$, the middle one to $t = 2 $, the rightmost one to $t = 3$. This is for the assumption that the
population is equally likely to start in the homogeneous state $S_0$ as well as in the homogeneous
state $S_N$.}
\label{fig:varioustDGold}
\end{figure}

\begin{figure}[H]
\centering
\hspace*{-1cm}  
\includegraphics[width=0.8\textwidth]{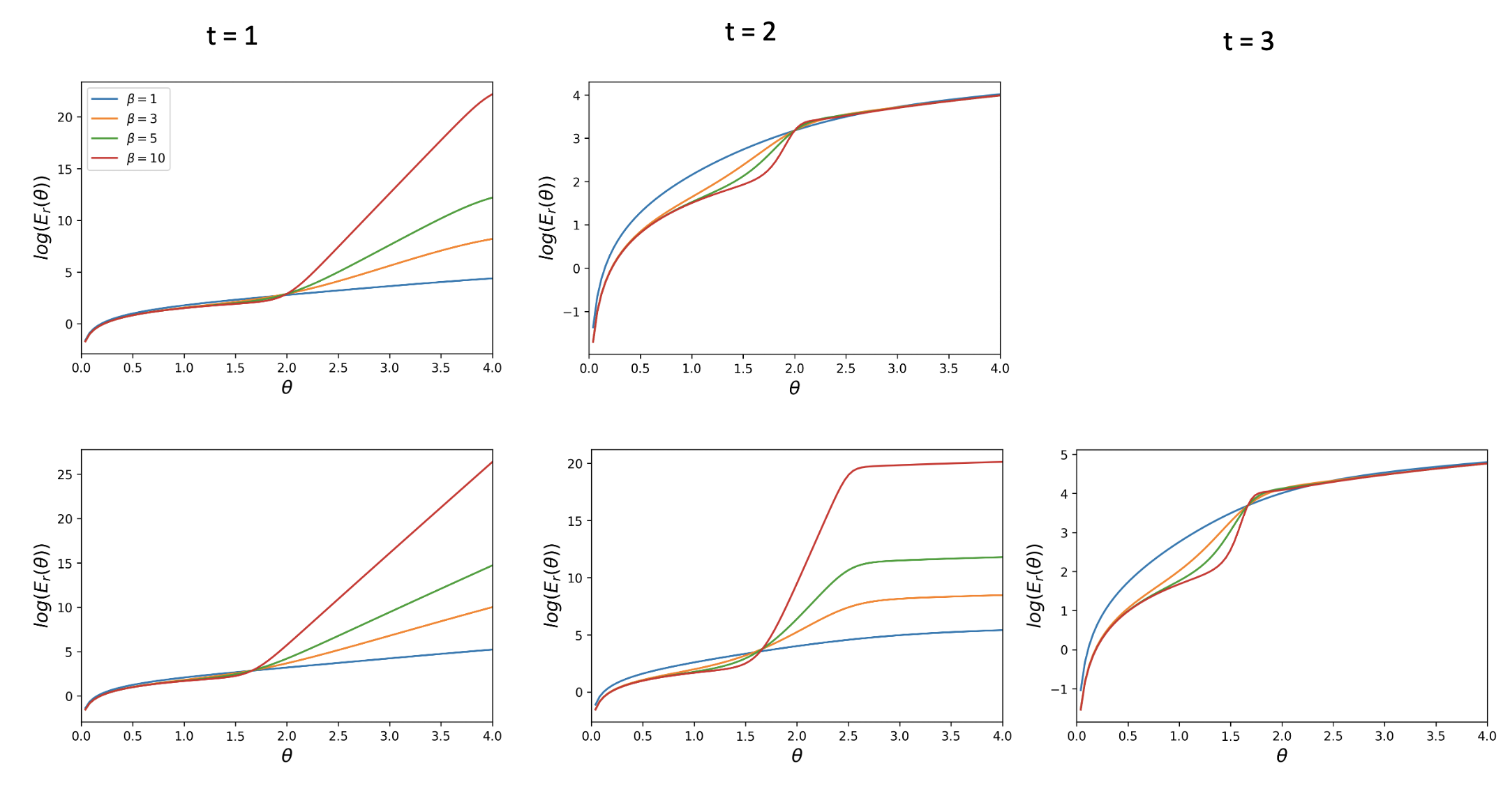}
\caption{Behaviour of the reward cost function $E_{r}(\theta)$, for different thresholds $t$ and strengths of selection $\beta$, for DG with $B = 2, c = 1$. The first row corresponds to $N = 3$ and the second one to $N = 4$. The leftmost column corresponds to $t = 1$, the middle one to $t = 2 $, the rightmost one to $t = 3$. This is for the assumption that the population is expected to start with all defectors, i.e. in state $S_0$.}
\label{fig:varioustDGnew}
\end{figure}

\subsection{Behaviour of the reward cost function under general mutation with various starting points}
\label{sec: behaviour general mutation}

In Figures \ref{fig:varioustDGrewardgeneralmutation0}, \ref{fig:varioustDGrewardgeneralmutationN/2} and  \ref{fig:varioustDGrewardgeneralmutationN-1}, we plot the reward cost function (as a function of the per capital cost $\theta)$  for the evolutionary processes with general mutation and different starting states $S_0, S_{\frac{N}{2}}, S_{N-1}$ for $t=1$, $t=N-1$ (for some $N$), and various values of other parameters. These figures clearly show the strong and non-trivial impact of the threshold $t$, the mutation rate $\mu$, and the strength of selection $\beta$ on the cost function. We observe that for $t=1$, the cost function is decreasing with respect to $\mu$. For $t=N-1$, the function is also decreasing with respect to $\mu$ for sufficiently small $\beta$, but for sufficiently large $\beta$ it behaves much more complicatedly.
To be comparable to the paper's main analysis using the small mutation approach, we assume that the mutation rate is small enough that the cost is calculated until a homogeneous state is reached. This is in line with most previous simulation works on incentive optimisation (see again the Remark).


\begin{figure}[H]
\centering
\hspace*{-1cm}  
\includegraphics[width=0.8\textwidth]{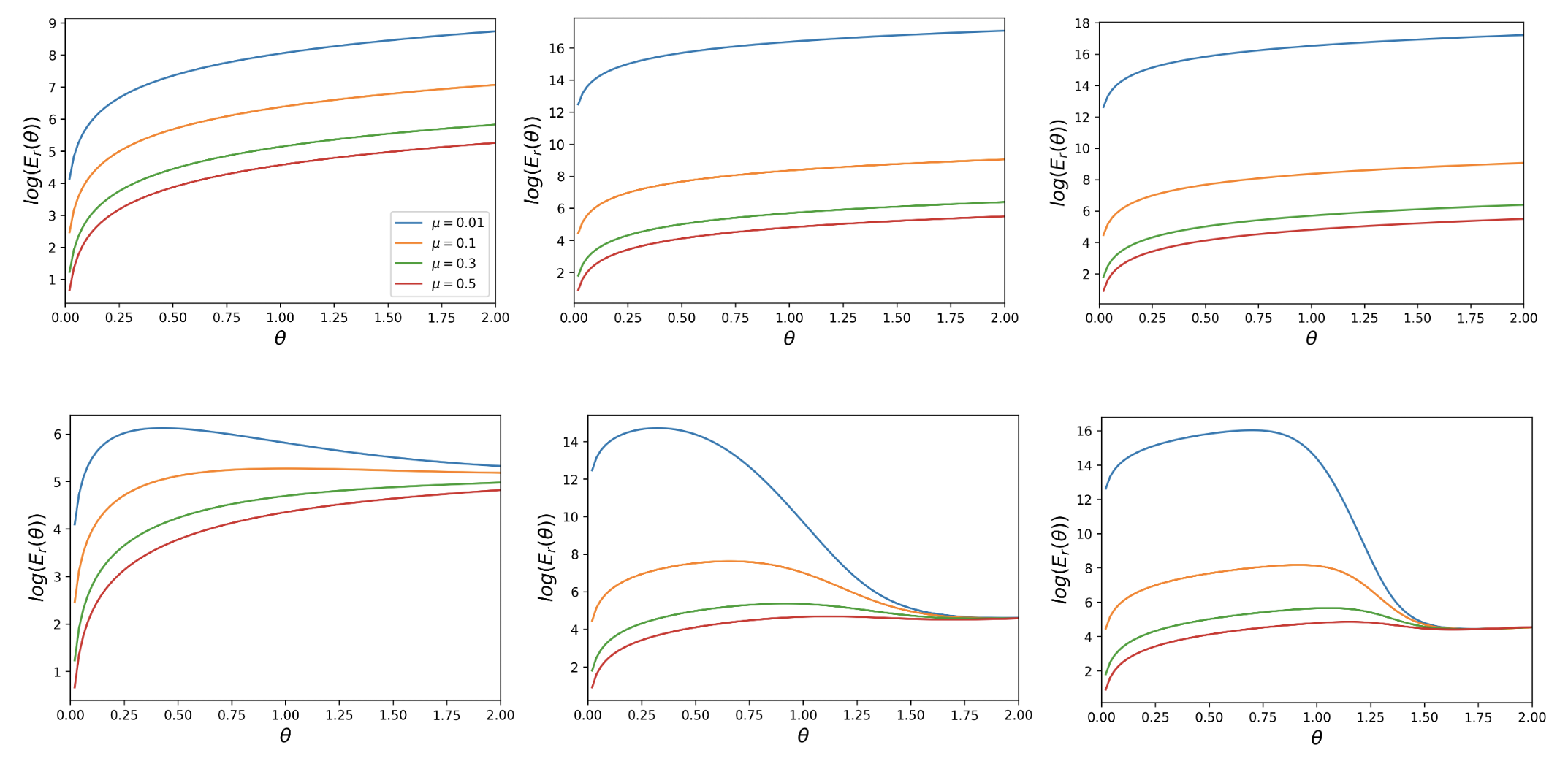}
\caption{Behaviour of the reward cost function $E_r(\theta)$, for $N = 6$, $t = 1, 5$, and strengths of selection $\beta = 1, 5, 10$, for DG with $B = 2, c = 1$. The first row corresponds to $t = 1$, while the second one to $t = 5$ $(N-1)$. The first column corresponds to $\beta = 1$, the middle one to $\beta = 5$, and the last one to $\beta = 10$. The behaviour of the reward cost function for PGG is similar. This is for a general mutation rate $\mu$. The starting point is $S_0$.}
\label{fig:varioustDGrewardgeneralmutation0}
\end{figure}

\begin{figure}[H]
\centering 
\hspace*{-1cm}  
\includegraphics[width=0.8\textwidth]{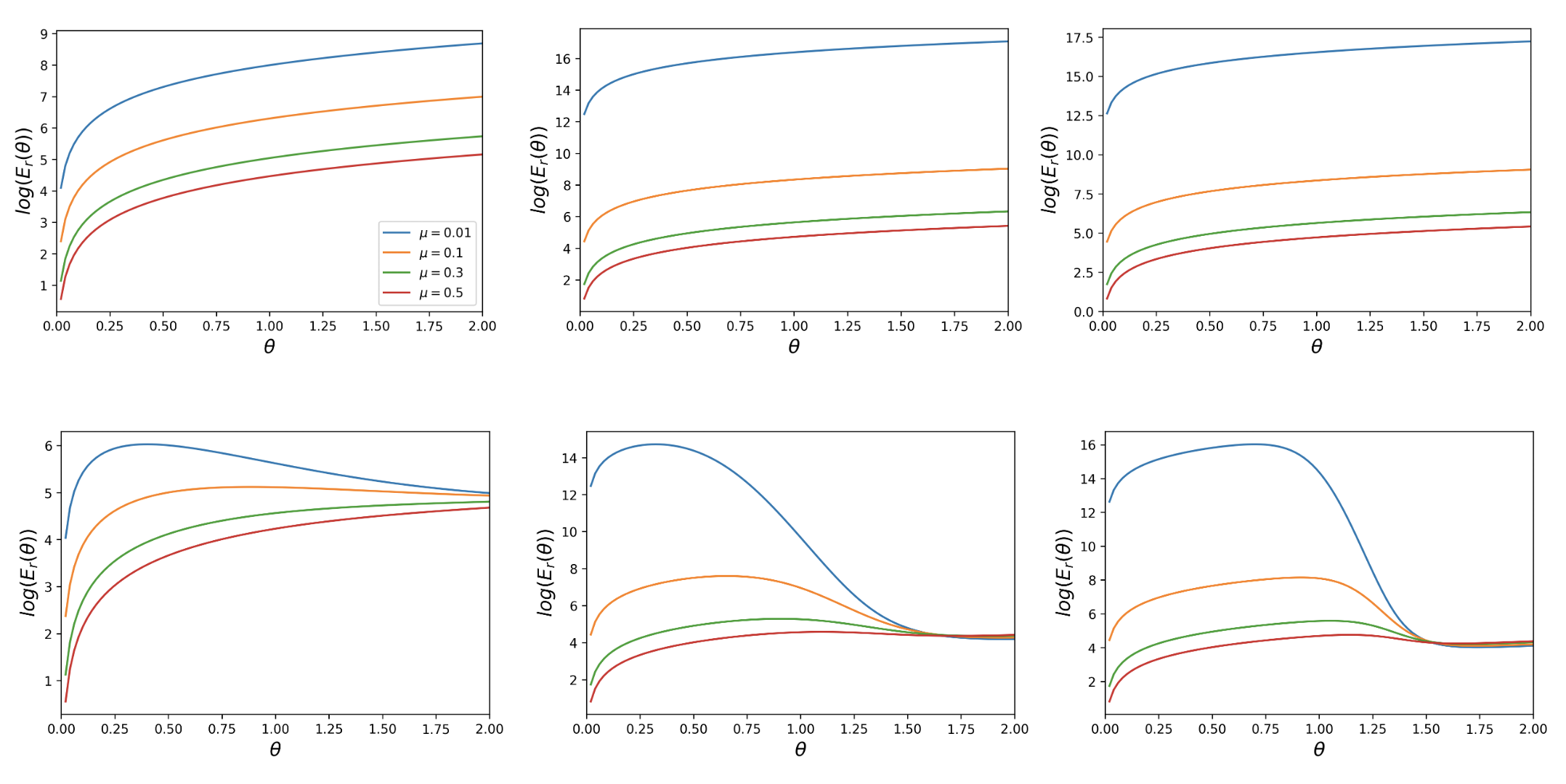}
\caption{Behaviour of the reward cost function $E_r(\theta)$, for $N = 6$, $t = 1, 5$, and strengths of selection $\beta = 1, 5, 10$, for DG with $B = 2, c = 1$. The first row corresponds to $t = 1$, while the second one to $t = 5$ $(N-1)$. The first column corresponds to $\beta = 1$, the middle one to $\beta = 5$, and the last one to $\beta = 10$. The behaviour of the reward cost function for PGG is similar. This is for a general mutation rate $\mu$. The starting point is $S_{\frac{N}{2}}$.}
\label{fig:varioustDGrewardgeneralmutationN/2}
\end{figure}

\begin{figure}[H]
\centering
\hspace*{-1cm}  
\includegraphics[width=0.8\textwidth]{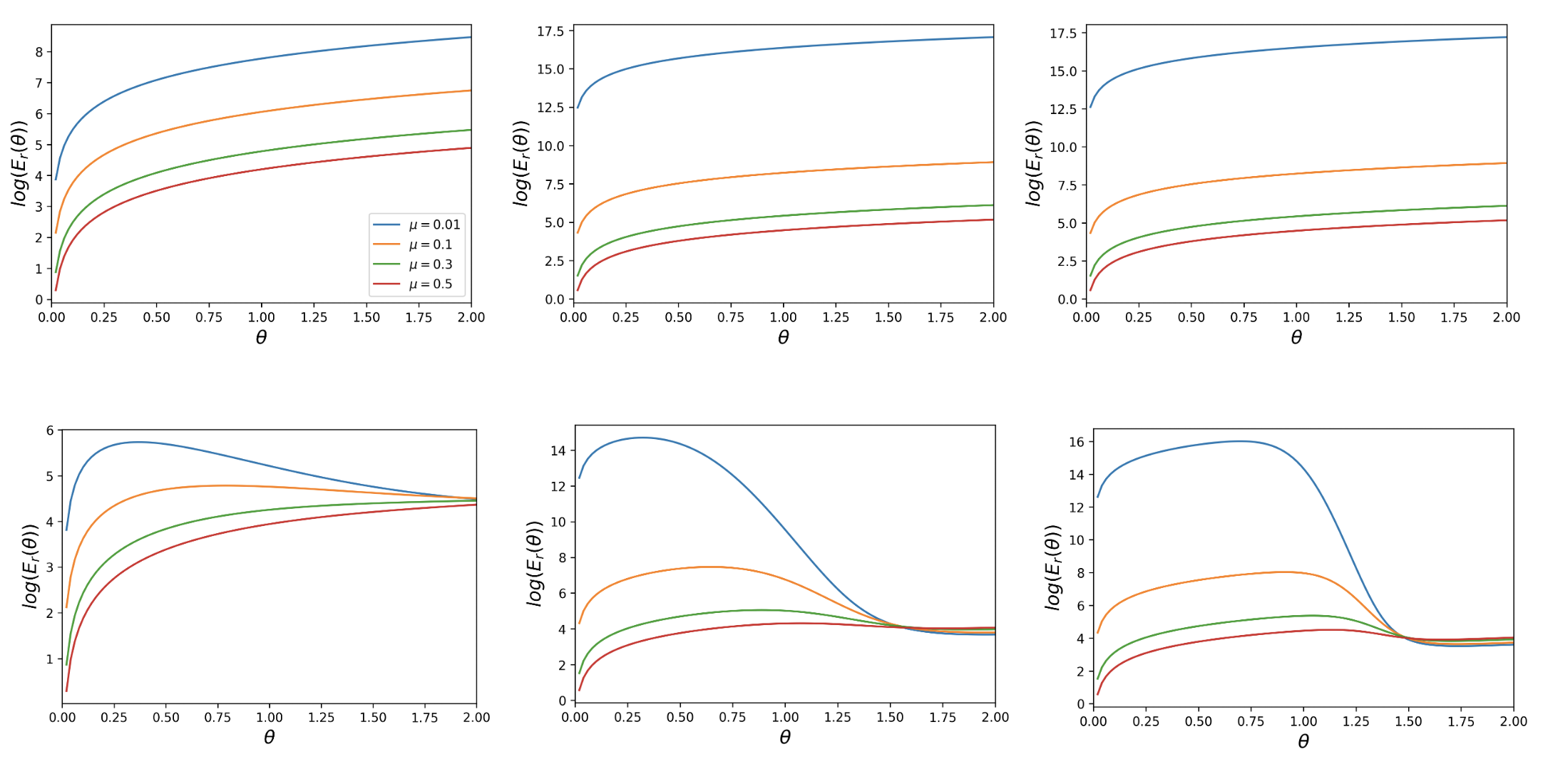}
\caption{Behaviour of the reward cost function $E_r(\theta)$, for $N = 6$, $t = 1, 5$, and strengths of selection $\beta = 1, 5, 10$, for DG with $B = 2, c = 1$. The first row corresponds to $t = 1$, while the second one to $t = 19$ $(N-1)$. The first column corresponds to $\beta = 1$, the middle one to $\beta = 5$, and the last one to $\beta = 10$. The behaviour of the reward cost function for PGG is similar. This is for a general mutation rate $\mu$. The starting point is $S_{N-2}$.}
\label{fig:varioustDGrewardgeneralmutationN-1}
\end{figure}

\subsection{Phase transition: change in the qualitative behaviour of the cost function}
Theorems \ref{thm: derivative positive t1 (1/2, 1/2)} and \ref{thm: derivative positive t1 (1, 0)} show that the reward cost function for $t=1$ is always non-decreasing for all values of the intensity of selection $\beta$. In \cite{duong2021cost}, it is shown that, for $t=N-1$, $E_r(\theta)$ and $E_p(\theta)$ are non-decreasing when $\beta$ is sufficiently small, but are not monotonic when $\beta$ is large enough. Furthermore,  in \cite{DuongDurbacHan2022}, the same behaviour is proven true for $E{mix}(\theta)$ when $t=N-1$. This demonstrates that the qualitative behaviour of the cost function changes significantly when $t$ and $\beta$ vary. We conjecture that there exists a critical threshold value of $t^*$ such that:
for $t\leq t^*$, $E(\theta)$ (where $E(\theta)$ can be either $E_r(\theta)$, $E_p(\theta)$, or $E_{mix}(\theta)$) is always non-deceasing for all $\beta$ when $t\leq t^*$, while for $t^*<t\leq N-1$, $E(\theta)$ is non-decreasing when $\beta$ is sufficiently small, but is not monotonic when $\beta$ is sufficiently large. Figures \ref{fig:varioustDGnew} - \ref{fig:varioustDGrewardandpunishmentold} suggest that, for small population size $N$, the critical threshold $t^*$ is $N-1$. How to prove this interesting phase transition phenomena for general $N$ is elusive to us at the moment and deserves further investigation in the future.  

All the simulations in Section \ref{sec: numerical analysis} can be found in the  \changeurlcolor{blue}\href{https://github.com/calinadurbac/Evolutionary-Game-Theory/tree/main/Reward%20and%20Punishment%20-%20General%20t}{`Evolutionary Game Theory' repository under the `Reward and Punishment - General t' folder}.

\section{Discussion}
\label{sec: discussion}

Over the past decades, there has been a lot of attention given to studying effective incentive mechanisms with the aim of promoting cooperation. Various mechanisms have been tested \cite{nowak2006,sigmund2010calculus,perc2017statistical,rand2013human,van2014reward}, with some of the most efficient ones being institutional incentives, where there is a central decision-maker in charge of applying them. In our model, we adopt this idea of an external decision-maker and entrust them with providing reward, punishment, or hybrid incentives to players interacting via two cooperation dilemmas, the Donation Game or the Public Goods Game. 
While other works have examined a comparable setting, relatively few have looked at the question of optimising the overall cost to the institution while maintaining a certain degree of cooperation. Moreover, most studies have focused on  the full-invest approach (also known as the standard institutional incentive model), in which incentives are always provided regardless of the population composition.

In this paper, we studied the problem of optimising the cost of institutional incentives that are provided conditionally on the number of cooperators in the population (namely, when less than a threshold $t$) while guaranteeing a certain level of cooperation in a well-mixed, finite population of selfish individuals. We use mathematical analysis to derive the cost function as well as the neutral drift and strong selection limits for the case $t = 1$ and the cost function for the case $t = 2$ (using two different initial states). We provide numerical investigation for the aforementioned cases as well as for the others, i.e., for $2 \leq t < N-1$. We also introduced cost functions with a general mutation rate for reward, punishment, and hybrid incentives and numerically analysed their behaviour.

For the mathematical analysis of the reward incentive cost function to be possible, we made some assumptions. Firstly, in order to derive the analytical formula for the frequency of cooperation, we assumed a small mutation limit \cite{rockenbach,nowak2004emergence,sigmund2010calculus}. Despite the simplified assumption, this small mutation limit approach has  wide applicability to scenarios which go well beyond the strict limit of very small mutation rates \cite{zisis2015generosity,hauert2007,sigmundinstitutions,rand2013evolution, DuongDurbacHan2022}. If we were to relax this assumption, the derivation of a closed form for the frequency of cooperation would be intractable. Secondly, we focused on two important cooperation dilemmas, the Donation Game and the Public Goods Game. Both have in common that the difference in average payoffs between a cooperator and a defector does not depend on the population composition. This special property allowed us to simplify the fundamental matrix of the Markov chain to a tri-diagonal form and apply the techniques of matrix analysis \cite{huang1997} to obtain a closed form of its inverse matrix. In games with more complex payoff matrices such as the general prisoner's dilemma and the collective risk game \cite{sun2021combination}, this property no longer holds (e.g., in the former case, the payoff difference, $\Pi_C(i)-\Pi_D(i)$, depends additively on $i$) and the technique in this paper cannot be directly applied. In these scenarios, we might consider other approaches to approximate the inverse matrix, exploiting its block structure.

We intend to utilise analytical techniques to explore the optimisation problems of punishment and hybrid incentives (reward and punishment used concurrently) for the individual-based incentive scheme. This would be interesting because it would allow for a cost comparison between reward or punishment incentives for a certain threshold value $t$ and the mixed scheme for the same value of $t$. Finally, there has been little attention given to the use of analysis for obtaining insights into cost-efficient incentives in structured populations or in more complex games (such as the general prisoner's dilemma and the collective risk game), so this would also be an engaging research avenue.

\newpage
 \renewcommand{\thefigure}{A\arabic{figure}}
 \renewcommand{\thetable}{A\arabic{table}}
 \setcounter{figure}{0}  

\section{Appendix}
\label{sec: appendix}

In this appendix, we provide explicit computations of the reward cost function $E_r(\theta)$ for small populations. We also include the calculations of the $W$ entries for reward $t = 2$ and some numerical simulations.

\subsection{Small population examples for reward}

\subsubsection{$N=3, t=1$}

For $N=3, t=1$, we have

$$
W = \begin{pmatrix}
1&-a\\
-d&1\\
\end{pmatrix}
$$

and 

$$
W^{-1} = \begin{pmatrix}
\frac{-1}{ad-1}&\frac{-a}{ad-1}\\
\frac{-d}{ad-1}&\frac{-1}{ad-1}\\
\end{pmatrix}.
$$

Thus, the cost function is
\begin{align*}
E_r(\theta) &= \frac{N^2\theta}{2} \sum_{j=1}^{1}\frac{(W^{-1})_{1j} + (W^{-1})_{N-1,j}}{N-j} 
\\&= \frac{9\theta}{4}(W^{-1}_{11}+W^{-1}_{21})
\\&= \frac{9\theta}{4}\frac{-d-1}{ad-1}
\\&= \frac{9\theta}{4}\Bigg(\frac{1}{1-\frac{1}{(1+e^{-x})(e^y+1)}}+\frac{1}{\Big(1-\frac{1}{(1+e^{-x})(e^y+1)}\Big)(e^y+1)}\Bigg),
\end{align*} where $x=\beta(\delta+\theta)$ and $y=\beta\delta$.
\subsubsection{$N=3, t=2$}
For $N = 3$, $t=2$, we have

$$
W = \begin{pmatrix}
1&-a\\
-c&1\\
\end{pmatrix}
$$

and 

$$
W^{-1} = \begin{pmatrix}
\frac{-1}{ac-1}&\frac{-a}{ac-1}\\
\frac{-c}{ac-1}&\frac{-1}{ac-1}\\
\end{pmatrix}.
$$

Hence, the cost function is 
\begin{align*}
E_r(\theta) &= \frac{N^2\theta}{2} \sum_{j=1}^{2}\frac{(W^{-1})_{1j} + (W^{-1})_{N-1,j}}{N-j} 
\\&= \frac{N^2\theta}{2}\big[\frac{(W^{-1}_{11}+W^{-1}_{21})}{2}+(W^{-1}_{12}+W^{-1}_{22})\big]
\\&= \frac{9\theta}{4}\Big(\frac{-c-1}{ac-1} \Big) +  \frac{9\theta}{2}\Big(\frac{-a-1}{ac-1}\Big)
\\&=  \frac{9\theta}{4}\Bigg(\frac{1}{1-\frac{1}{(1+e^{-x})(1+e^x)}} + \frac{1}{\Big(1-\frac{1}{(1+e^{-x})(1+e^x)}\Big)(1+e^{x})}\Bigg) \\&\qquad + \frac{9\theta}{2}\Bigg(\frac{1}{1-\frac{1}{(1+e^{-x})(1+e^x)}} + \frac{1}{\Big(1-\frac{1}{(1+e^{-x})(1+e^x)}\Big)(1+e^{-x})}\Bigg),
\end{align*} where $x=\beta(\delta+\theta)$ and $y=\beta\delta$.

\subsubsection{$N=4$, $t=1$}

For $N=4$, $t=1$, we have 

$$
W = \begin{pmatrix}
1&-a&0\\
-d&1&-b\\
0&-d&1\\
\end{pmatrix}
$$

and 

$$
W^{-1} = \begin{pmatrix}
\frac{bd-1}{ad+bd-1}&\frac{-a}{ad+bd-1}&\frac{-ab}{ad+bd-1}\\
\frac{-d}{ad+bd-1}&\frac{-1}{ad+bd-1}&-\frac{-b}{ad+bd-1}\\
\frac{-d^2}{ad+bd-1}&\frac{-d}{ad+bd-1}&\frac{ad-1}{ad+bd-1}\\
\end{pmatrix}. 
$$

Thus, the cost function is 
\begin{align*}
E_r(\theta) &= \frac{N^2\theta}{2} \sum_{j=1}^{1}\frac{(W^{-1})_{1j} + (W^{-1})_{N-1,j}}{N-j} 
\\&= \frac{8\theta}{3}(W^{-1}_{11}+W^{-1}_{31})
\\&= \frac{8\theta}{3}\frac{-d^2+bd-1}{ad+bd-1}
\\&= \frac{8\theta}{3}\Bigg(\frac{-1+\frac{1}{(1+e^{-y})(e^y+1)}}{-1+\frac{1}{(1+e^{-y})(e^y+1)}+\frac{1}{(1+e^{-x})(e^y+1)}} - \frac{1}{(e^y+1)^2\Big(-1+\frac{1}{(1+e^{-y})(e^y+1)}+\frac{1}{(1+e^{-x})(e^y+1)}\Big)} \Bigg),
\end{align*} where $x=\beta(\delta+\theta)$ and $y=\beta\delta$.

\subsubsection{$N=4$, $t=2$}

For $N=4$, $t=2$, we have
$$
W = \begin{pmatrix}
1&-a&0\\
-c&1&-a\\
0&-d&1\\
\end{pmatrix}
$$

and 

$$
W^{-1} = \begin{pmatrix}
\frac{ad-1}{ac+ad-1}&\frac{-a}{ac+ad-1}&\frac{-a^2}{ac+ad-1}\\
\frac{-c}{ac+ad-1}&\frac{-1}{ac+ad-1}&\frac{-a}{ac+ad-1}\\
\frac{-cd}{ac+ad-1}&\frac{-d}{ac+ad-1}&\frac{ac-1}{ac+ad-1}\\
\end{pmatrix}. 
$$

Thereby, the cost function is
\begin{align*}
E_r(\theta) &= \frac{N^2\theta}{2} \sum_{j=1}^{2}\frac{(W^{-1})_{1j} + (W^{-1})_{N-1,j}}{N-j} 
\\&= 8\theta\Bigg[\frac{(W^{-1}_{11}+W^{-1}_{31})}{3}+\frac{(W^{-1}_{12}+W^{-1}_{32})}{2}\Bigg]
\\&= \frac{8\theta}{3}\Big(\frac{-cd+ad-1}{ac+ad-1} \Big) +  4\theta\Big(\frac{-a-d}{ac+ad-1}\Big)
\\&= \frac{8\theta}{3}\Bigg(\frac{-1+\frac{1}{(1+e^{-x})(e^x+1)}}{-1+\frac{1}{(1+e^{-x})(e^y+1)}+\frac{1}{(1+e^{-x})(e^x+1)}} - \frac{1}{(e^x+1)(e^y+1)\Big(-1+\frac{1}{(1+e^{-x})(e^y+1)}+\frac{1}{(1+e^{-x})(e^x+1)}\Big)} \Bigg) + \\& \qquad 4\theta\Bigg(\frac{1}{(e^y+1)\Big(1-\frac{1}{(1+e^{-x})(e^y+1)}-\frac{1}{(1+e^{-x})(e^x+1)}\Big)} - \frac{1}{(1+e^{-x})\Big(-1+\frac{1}{(1+e^{-x})(e^y+1)}+\frac{1}{(1+e^{-x})(e^x+1)}\Big)} \Bigg),
\end{align*} where $x=\beta(\delta+\theta)$ and $y=\beta\delta$.

\subsubsection{$N=4$, $t=3$}

For $N=4$, $t=3$, we have
$$
W = \begin{pmatrix}
1&-a&0\\
-c&1&-a\\
0&-c&1\\
\end{pmatrix}
$$

and 

$$
W^{-1} = \begin{pmatrix}
\frac{ac-1}{2ac-1}&\frac{-a}{2ac-1}&\frac{-a^2}{2ac-1}\\
\frac{-c}{2ac-1}&\frac{-1}{2ac-1}&\frac{-a}{2ac-1}\\
\frac{-c^2}{2ac-1}&\frac{-c}{2ac-1}&\frac{ac-1}{2ac-1}\\
\end{pmatrix}. 
$$

Thus, the cost function is
\begin{align*}
E_r(\theta) &= \frac{N^2\theta}{2} \sum_{j=1}^{3}\frac{(W^{-1})_{1j} + (W^{-1})_{N-1,j}}{N-j} 
\\&= \frac{N^2\theta}{2}\Bigg[\frac{(W^{-1}_{11}+W^{-1}_{31})}{3}+\frac{(W^{-1}_{12}+W^{-1}_{32})}{2} + (W^{-1}_{13}+W^{-1}_{33})\Bigg]
\\&=  \frac{8\theta}{3}\Big(\frac{c^2+ac-1}{2ac-1} \Big) + 4\theta\Big(\frac{-a+c}{2ac-1}\Big) + 8\theta\Big(\frac{-a^2+ac-1}{2ac-1}\Big)
\\&= \frac{8\theta}{3}\Bigg(\frac{-1+\frac{1}{(1+e^{-x})(e^x+1)}}{-1+\frac{2}{(1+e^{-x})(e^x+1)}} - \frac{1}{\Big(1-\frac{2}{(1+e^{-x})(e^x+1)}\Big)(1+e^x)^2} \Bigg) + \\& \qquad 4\theta\Bigg(\frac{1}{\Big(1-\frac{2}{(1+e^{-x})(e^x+1)}\Big)(e^x+1)} - \frac{1}{\Big(-1+\frac{2}{(1+e^{-x})(e^x+1)}\Big)(e^{-x}+1)}\Bigg) + \\& \qquad 8\theta\Bigg(\frac{-1+\frac{1}{(1+e^{-x})(e^x+1)}}{-1+\frac{2}{(1+e^{-x})(e^x+1)}} - \frac{1}{\Big(-1+\frac{2}{(1+e^{-x})(e^x+1)}\Big)(1+e^{-x})^2} \Bigg),
\end{align*} where $x=\beta(\delta+\theta)$ and $y=\beta\delta$.

\subsection{The $W$ entries for $E_r(\theta)$ with $t = 2$}
\label{sec: Reward t = 2 Ws}

\begin{align*}
(W^{-1}_{1,1})&=\frac{1}{1-(-c)(-a)\frac{y_3}{y_2}}
\\& = \frac{1}{1-ac\frac{y_3}{y_2}},\\
(W^{-1})_{N-1,1}&=(-1)^{N-2}\left(\prod_{k=1}^{N-2} a_{k+1}\right) \frac{y_{N}}{y_{2}} \phi_{1,1}
\\& = (-1)^{N-2} \Bigg(\prod_{k=1}^{N-2} a_{k+1}\Bigg)\frac{1}{y_{2}} \frac{1}{1-ac\frac{y_3}{y_2}}
\\& =(-1)^{N^{-2}}\left(a_{2} \cdot a_{3} \cdot \ldots \cdot a_{N-1}\right) \frac{1}{y_{2}} \frac{1}{1-a_{c} \frac{y_{3}}{y_{2}}} \\
\\& =(-1)^{N-2}(-c) \cdot(-d) \cdot \ldots \cdot(-d) \frac{1}{y_{2}} \frac{1}{1-a c \frac{y_{3}}{y_{2}}} \\
\\& =(-1)^{N-2}(-c)(-d)^{N-3} \frac{1}{y_2} \frac{1}{1-a c \frac{y_{3}}{y_{2}}}
\\& =c d^{N-3} \frac{1}{y_{2}} \frac{1}{1-a c \frac{y_{3}}{y_{2}}},\\
(W^{-1})_{1,2}&=(-1)^{1}\left(\prod_{k=1}^{1} c_{2-k}\right) \frac{z_{0}}{z_{1}} \phi_{2,2} 
\\& =-\left(c_{1}\right) \frac{1}{1} \phi_{2,2} 
\\& =a \phi_{2,2} 
\\& =a \frac{1}{1-(-c)(-a) \frac{z_{0}}{z_{1}}-(-d)(-a) \frac{y_{4}}{y_{3}}} 
\\& =a \frac{1}{1-a c-a d\frac{y_4}{y_3}},\\
(W^{-1})_{N-1,2}&=(-1)^{N-3}\left(\prod_{k=1}^{N-3} a_{2+k}\right) \frac{y_{N}}{y_{3}} \phi_{2,2} 
\\& =(-1)^{N-3}\left(a_{3} \cdot a_{4} \ldots \cdot a_{N-1}\right) \frac{1}{y_{3}} \frac{1}{1-a c-a d \frac{y_{4}}{y_{3}}} 
\\& =(-1)^{N-3}(-d)^{N-3} \frac{1}{y_{3}} \frac{1}{1-a c-a d \frac{y_{4}}{y_{3}}} \\
\\& =d^{N-3}\frac{1}{y_{3}} \frac{1}{1-a c-a d \frac{y_4}{y_3}}.
\end{align*}

\subsection{Behaviour of the cost functions under small mutation with different initial states}
\label{sec: behaviours}

In this subsection, we present the results of our numerical simulation in the case of reward incentives for PGG and in the cases of punishment and hybrid incentives for DG, for the two initial starting states. 

\begin{figure}[H]
\centering
\hspace*{-1cm}  
\includegraphics[width=0.8\textwidth]{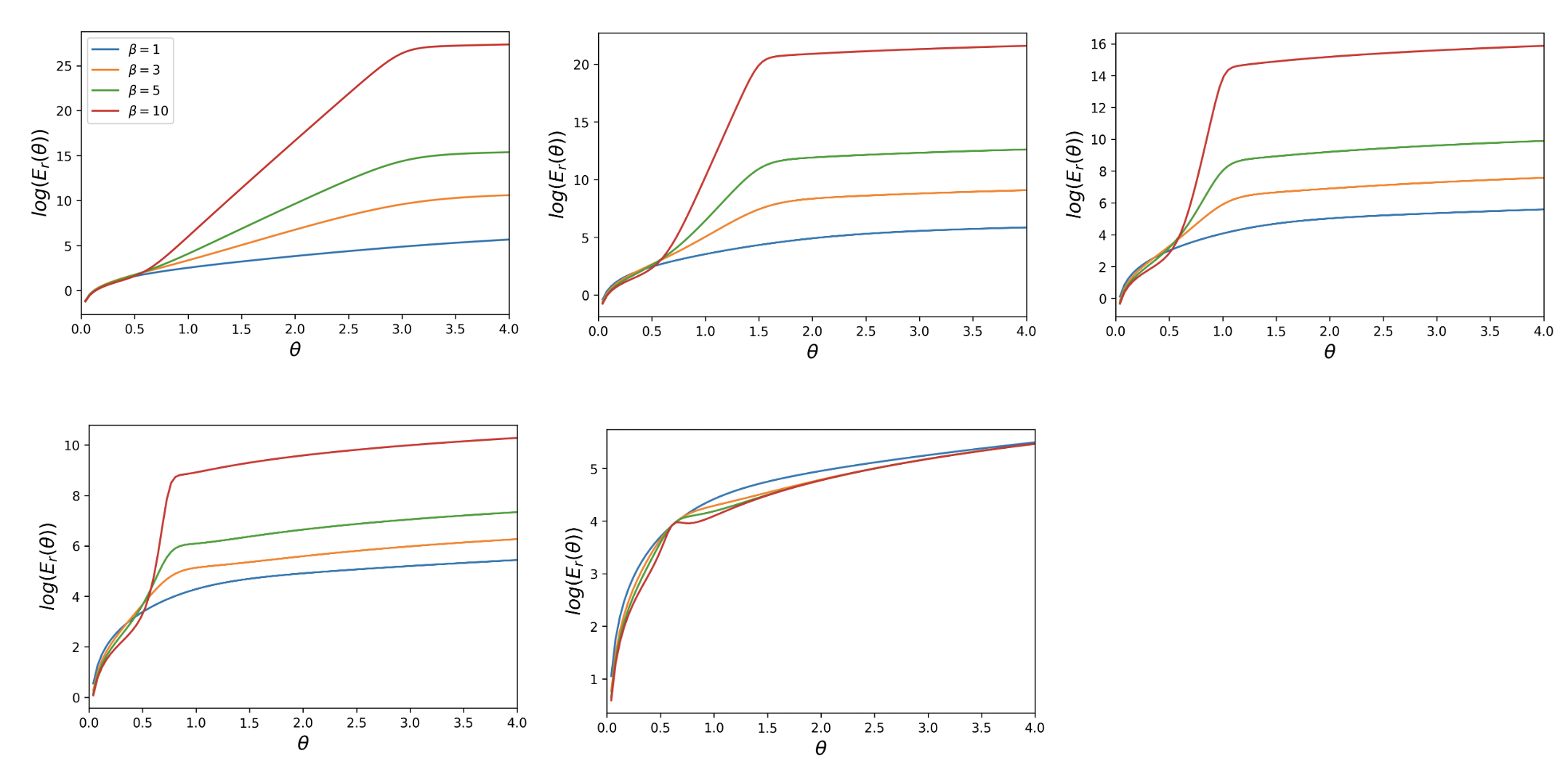}
\caption{Behaviour of the reward cost function $E_{r}(\theta)$, for different thresholds $t$ and strengths of selection $\beta$, for PGG with $N = 6, c = 1, r = 2, n = 3$. The first row corresponds to $t = 1$, $t = 2$, and $t = 3$, while the second one to $t = 4$ and $t = 5$. This is for the assumption that the
population is equally likely to start in the homogeneous state $S_0$ as well as in the homogeneous
state $S_N$.}
\label{fig:varioustPGGold}
\end{figure}

\begin{figure}[H]
\centering
\hspace*{-1cm}  
\includegraphics[width=0.8\textwidth]{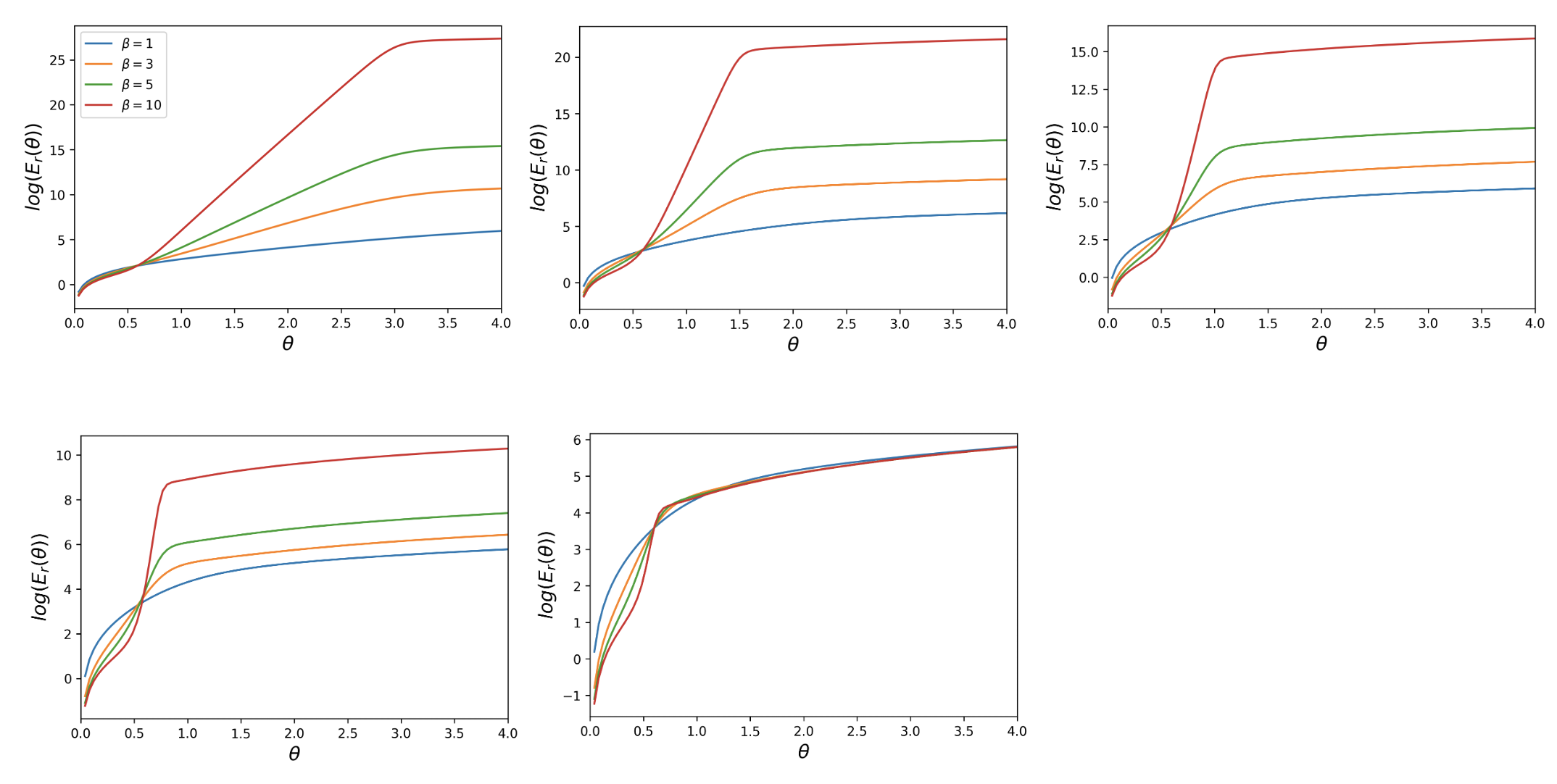}
\caption{Behaviour of the reward cost function $E_{r}(\theta)$, for different thresholds $t$ and strengths of selection $\beta$, for PGG with $N = 6, c = 1, r = 2, n = 3$. The first row corresponds to $t = 1$, $t = 2$, and $t = 3$, while the second one to $t = 4$ and $t = 5$. This is for the assumption that the population is expected to start with all defectors, i.e. in state $S_0$.}
\label{fig:varioustPGGnew}
\end{figure}

\begin{figure}[H]
\centering
\hspace*{-1cm}  
\includegraphics[width=0.8\textwidth]{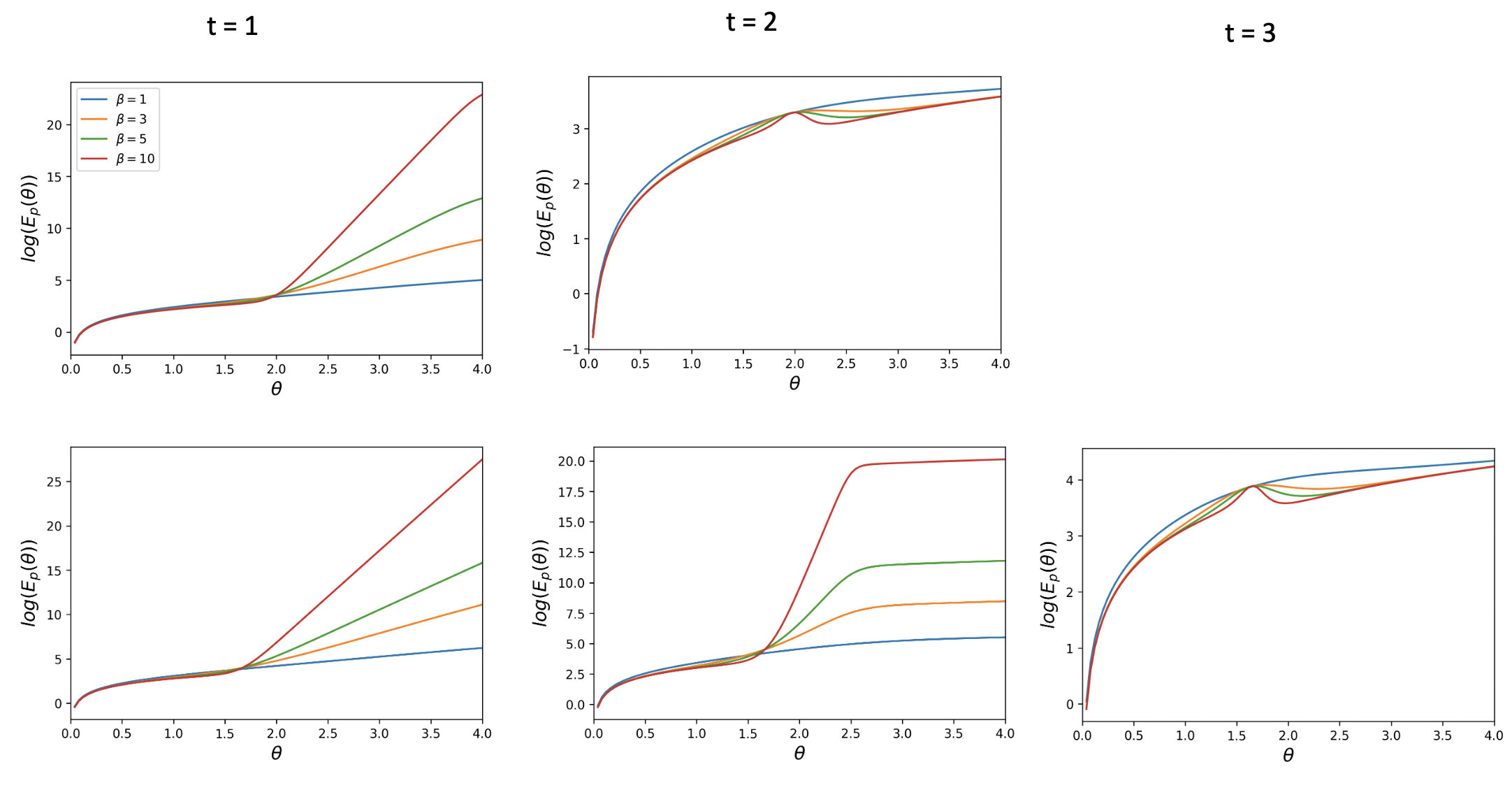}
\caption{Behaviour of the punishment cost function $E_{p}(\theta)$, for different     thresholds $t$ and strengths of selection $\beta$, for DG with $B = 2, c = 1$. The first row corresponds to $N = 3$ and the second one to $N = 4$. The leftmost column corresponds to $t = 1$, the middle one to $t = 2 $, the rightmost one to $t = 3$. The behaviour of the cost function for PGG is similar. This is for the assumption that the
population is equally likely to start in the homogeneous state $S_0$ as well as in the homogeneous
state $S_N$.}
\label{fig:varioustDGpunishmentold}
\end{figure}

\begin{figure}[H]
\centering
\hspace*{-1cm}  
\includegraphics[width=0.8\textwidth]{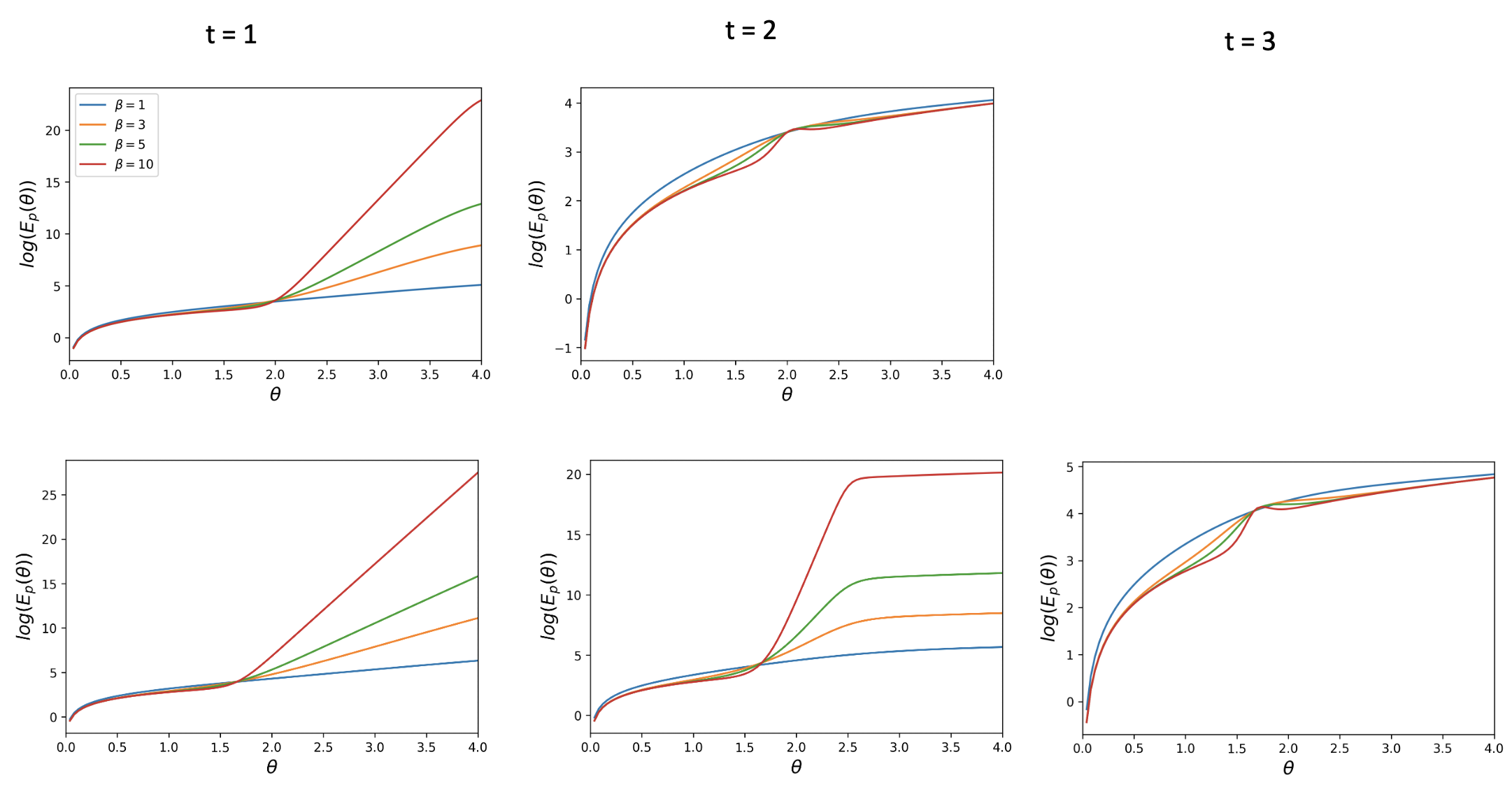}
\caption{Behaviour of the punishment cost function $E_{p}(\theta)$, for different     thresholds $t$ and strengths of selection $\beta$, for DG with $B = 2, c = 1$. The first row corresponds to $N = 3$ and the second one to $N = 4$. The leftmost column corresponds to $t = 1$, the middle one to $t = 2 $, the rightmost one to $t = 3$. The behaviour of the cost function for PGG is similar. This is for the assumption that the population is expected to start with all defectors, i.e. in state $S_0$.}
\label{fig:varioustDGpunishmentnew}
\end{figure}

\begin{figure}[H]
\centering
\hspace*{-1cm}  
\includegraphics[width=0.8\textwidth]{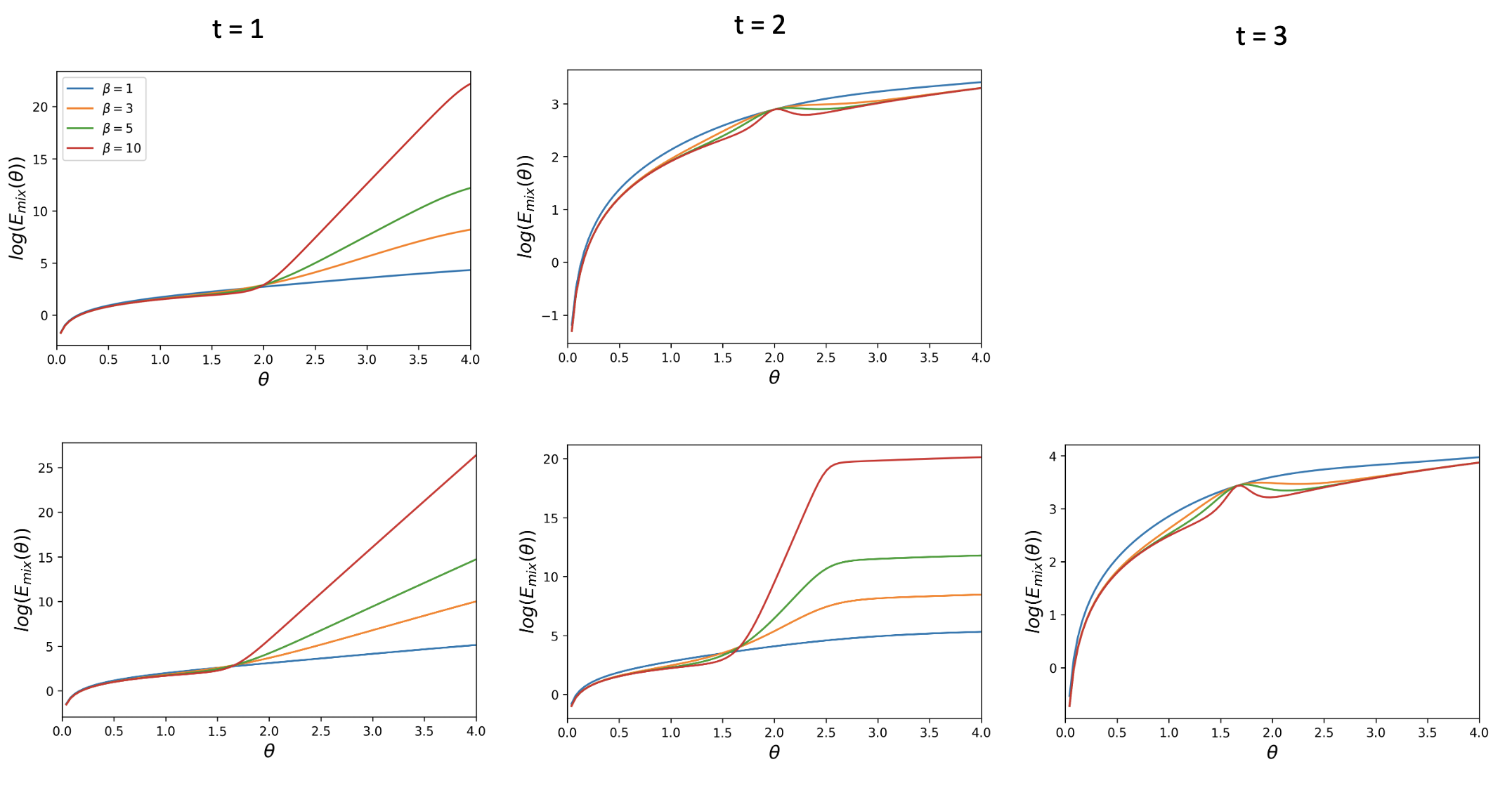}
\caption{Behaviour of the hybrid incentive cost function (mix of reward and punishment) $E_{mix}(\theta)$ (where $a = 1, b = 1$), for different thresholds $t$ and strengths of selection $\beta$, for DG with $B = 2, c = 1$. The first row corresponds to $N = 3$ and the second one to $N = 4$. The leftmost column corresponds to $t = 1$, the middle one to $t = 2 $, the rightmost one to $t = 3$. The behaviour of the cost function for PGG is similar. This is for the assumption that the
population is equally likely to start in the homogeneous state $S_0$ as well as in the homogeneous
state $S_N$.}
\label{fig:varioustDGrewardandpunishmentold}
\end{figure}

\begin{figure}[H]
\centering
\hspace*{-1cm}  
\includegraphics[width=0.8\textwidth]{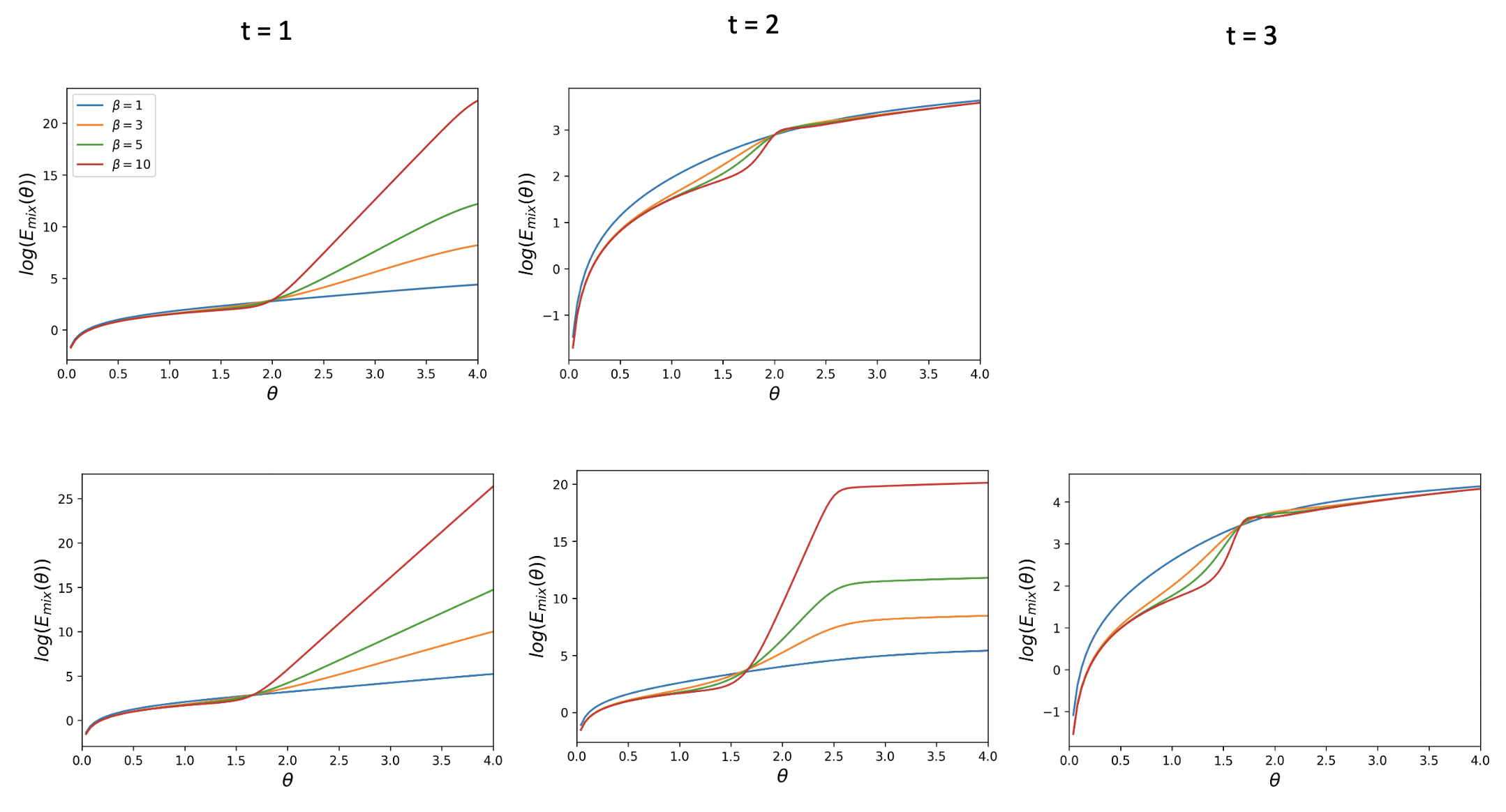}
\caption{Behaviour of the hybrid incentive cost function (mix of reward and punishment) $E_{mix}(\theta)$ (where $a = 1, b = 1$), for different thresholds $t$ and strengths of selection $\beta$, for DG with $B = 2, c = 1$. The first row corresponds to $N = 3$ and the second one to $N = 4$. The leftmost column corresponds to $t = 1$, the middle one to $t = 2 $, the rightmost one to $t = 3$. The behaviour of the cost function for PGG is similar. This is for the assumption that the population is expected to start with all defectors, i.e. in state $S_0$.}
\label{fig:varioustDGrewardandpunishmentnew}
\end{figure}

\section*{Acknowledgements} The research of M.H.D was supported by EPSRC Grants EP/V038516/1, EP/Y008561/1, and a Royal International Exchange Grant IES-R3-223047. C.M.D is supported by an EPSRC Studentship. T.A.H. is supported by EPSRC (grant EP/Y00857X/1) and the Future of Life Institute. 
\section*{Data Availability Statement}
Data sharing is not applicable to this article as no datasets were generated or analysed during the current study.

\bibliographystyle{plain}
\bibliography{references}
\end{document}